\newcolumntype{L}[1]{>{\raggedright\let\newline\\\arraybackslash\hspace{0pt}}m{#1}}
\newcolumntype{C}[1]{>{\centering\let\newline\\\arraybackslash\hspace{0pt}}m{#1}}
\newcolumntype{R}[1]{>{\raggedleft\let\newline\\\arraybackslash\hspace{0pt}}m{#1}}
\let\MYcaption\@makecaption
\let\@makecaption\MYcaption
\let\oldgls\gls
\let\oldglspl\glspl
\newcommand\fussy@ifnextchar[3]{%
  \let\reserved@d=#1%
  \def\reserved@a{#2}%
  \def\reserved@b{#3}%
  \futurelet\@let@token\fussy@ifnch}
\def\fussy@ifnch{%
  \ifx\@let@token\reserved@d
    \let\reserved@c\reserved@a 
  \else
    \let\reserved@c\reserved@b
  \fi
 \reserved@c}
\renewcommand{\gls}[1]{%
  \oldgls{#1}\fussy@ifnextchar.{\@checkperiod}{\@}}
\renewcommand{\glspl}[1]{%
  \oldglspl{#1}\fussy@ifnextchar.{\@checkperiod}{\@}}
\newcommand{\@checkperiod}[1]{%
  \ifnum\sfcode`\.=\spacefactor\else#1\fi
}
\newacronym{wrt}{w.r.t.}{with respect to}
\newacronym{RHS}{RHS}{right-hand side}
\newacronym{LHS}{LHS}{left-hand side}
\newacronym{iid}{i.i.d.}{independent and identically distributed}
\let\saved@bibitem\@bibitem\makeatother
\let\@bibitem\saved@bibitem\makeatother
\crefname{equation}{}{}
\Crefname{equation}{}{}
\crefname{claim}{claim}{claims}
\crefname{step}{step}{steps}
\crefname{line}{line}{lines}
\crefname{condition}{condition}{conditions}
\crefname{dmath}{}{}
\crefname{dseries}{}{}
\crefname{dgroup}{}{}
\crefname{Problem}{Problem}{Problems}
\crefname{Theorem}{Theorem}{Theorems}
\crefname{Corollary}{Corollary}{Corollaries}
\crefname{Proposition}{Proposition}{Propositions}
\crefname{Lemma}{Lemma}{Lemmas}
\crefname{Definition}{Definition}{Definitions}
\crefname{Example}{Example}{Examples}
\crefname{Assumption}{Assumption}{Assumptions}
\crefname{Remark}{Remark}{Remarks}
\crefname{Rem}{Remark}{Remarks}
\crefname{remarks}{Remarks}{Remarks}
\crefname{Appendix}{Appendix}{Appendices}
\crefname{Exercise}{Exercise}{Exercises}
\crefname{Theorem_A}{Theorem}{Theorems}
\crefname{Corollary_A}{Corollary}{Corollaries}
\crefname{Proposition_A}{Proposition}{Propositions}
\crefname{Lemma_A}{Lemma}{Lemmas}
\crefname{Definition_A}{Definition}{Definitions}
			\let\Cref\crtCref
			\let\cref\crtcref
\newtheorem{Theorem}{Theorem}
\newtheorem{Corollary}{Corollary}
\newtheorem{Proposition}{Proposition}
\newtheorem{Lemma}{Lemma}
\newtheorem{Theorem}{Theorem}
\newtheorem{Corollary}[theorem]{Corollary}
\newtheorem{Definition}{Definition}
\newtheorem{Assumption}{Assumption}
\theoremstyle{remark}
\theoremstyle{plain}
\newcommand{\calC}{\mathcal{C}}
\newcommand{\calF}{\mathcal{F}}
\newcommand{\calG}{\mathcal{G}}
\newcommand{\calM}{\mathcal{M}}
\newcommand{\calO}{\mathcal{O}}
\newcommand{\calP}{\mathcal{P}}
\newcommand{\bP}{\mathbf{P}}
\DeclareSymbolFont{bsfletters}{OT1}{cmss}{bx}{n}
\DeclareSymbolFont{ssfletters}{OT1}{cmss}{m}{n}
\DeclareMathSymbol{\bsfGamma}{0}{bsfletters}{'000}
\DeclareMathSymbol{\ssfGamma}{0}{ssfletters}{'000}
\DeclareMathSymbol{\bsfDelta}{0}{bsfletters}{'001}
\DeclareMathSymbol{\ssfDelta}{0}{ssfletters}{'001}
\DeclareMathSymbol{\bsfTheta}{0}{bsfletters}{'002}
\DeclareMathSymbol{\ssfTheta}{0}{ssfletters}{'002}
\DeclareMathSymbol{\bsfLambda}{0}{bsfletters}{'003}
\DeclareMathSymbol{\ssfLambda}{0}{ssfletters}{'003}
\DeclareMathSymbol{\bsfXi}{0}{bsfletters}{'004}
\DeclareMathSymbol{\ssfXi}{0}{ssfletters}{'004}
\DeclareMathSymbol{\bsfPi}{0}{bsfletters}{'005}
\DeclareMathSymbol{\ssfPi}{0}{ssfletters}{'005}
\DeclareMathSymbol{\bsfSigma}{0}{bsfletters}{'006}
\DeclareMathSymbol{\ssfSigma}{0}{ssfletters}{'006}
\DeclareMathSymbol{\bsfUpsilon}{0}{bsfletters}{'007}
\DeclareMathSymbol{\ssfUpsilon}{0}{ssfletters}{'007}
\DeclareMathSymbol{\bsfPhi}{0}{bsfletters}{'010}
\DeclareMathSymbol{\ssfPhi}{0}{ssfletters}{'010}
\DeclareMathSymbol{\bsfPsi}{0}{bsfletters}{'011}
\DeclareMathSymbol{\ssfPsi}{0}{ssfletters}{'011}
\DeclareMathSymbol{\bsfOmega}{0}{bsfletters}{'012}
\DeclareMathSymbol{\ssfOmega}{0}{ssfletters}{'012}
\newcommand{\qednew}{\nobreak \ifvmode \relax \else
      \ifdim\lastskip<1.5em \hskip-\lastskip
      \hskip1.5em plus0em minus0.5em \fi \nobreak
      \vrule height0.75em width0.5em depth0.25em\fi}
\newcommand{\ud}{\mathrm{d}}
\newcommand{\norm}[1]{{\left\lVert{#1}\right\rVert}}
\DeclareDocumentCommand\set{s m t| m}{%
  \IfBooleanTF#1%
	{\left\{\, #2\mathrel{} \IfBooleanTF{#3}{\middle|}{:}\mathrel{}  #4\, \right\}}%
  {\{\, #2 \IfBooleanTF{#3}{\mid}{\mathrel{} : \mathrel{}} #4\, \}}%
}
\DeclareDocumentCommand \ifcond {m m} {%
	{#1} %
	\IfValueT{#2}{\, \middle|\, {#2}}%
}
\DeclareDocumentCommand \P {e{_} g >{\SplitArgument{ 1 }{ @| }}d() g } {%
	\mathbb{P}%
	\IfValueTF{#1}{_{#1}}
		{\IfValueT{#2}{_{#2}}}%
	\IfValueT{#3}{\left(\ifcond#3}%
	\IfValueT{#4}{\, \middle|\, {#4}}%
	\IfValueT{#3}{\right)}%
}
\DeclareDocumentCommand \E {e{_} g >{\SplitArgument{ 1 }{ @| }}o g } {%
	\mathbb{E}%
	\IfValueTF{#1}{_{#1}}
		{\IfValueT{#2}{_{#2}}}%
	\IfValueT{#3}{\left[\ifcond#3}%
	\IfValueT{#4}{\, \middle|\, {#4}}%
	\IfValueT{#3}{\right]}%
}
\definecolor{gray90}{gray}{0.9}
	\newcommand{\msout}[1]{\text{\color{green} \sout{\ensuremath{#1}}}}
	\newcommand{\del}[1]{{\color{green}\ifmmode \msout{#1}\else\sout{#1}\fi}}
	\newcommand{\msout}[1]{#1}
	\newcommand{\del}[1]{#1}
\newcommand{\hhide}[1]{}
\renewcommand{\figurename}{Fig.}
\newcommand{\figref}[1]{\figurename~\ref{#1}}
\newcommand{\includeCroppedPdf}[2][]{%
    \IfFileExists{./Figures/#2-crop.pdf}{}{%
        \immediate\write18{pdfcrop ./Figures/#2 ./Figures/#2-crop.pdf}}%
    \includegraphics[#1]{./Figures/#2-crop.pdf}}
	 \def\@testdef #1#2#3{%
		 \def\reserved@a{#3}\expandafter \ifx \csname #1@#2\endcsname
		\reserved@a  \else
	 \typeout{^^Jlabel #2 changed:^^J%
	 \meaning\reserved@a^^J%
	 \expandafter\meaning\csname #1@#2\endcsname^^J}%
	 \@tempswatrue \fi}
\title{Graph signal processing with categorical perspective}
\author{Feng Ji, Xingchao Jian, Wee Peng Tay\footnote{The authors are with the School of Electrical and Electronic Engineering, Nanyang Technological University, 639798, Singapore (e-mail: jifeng@ntu.edu.sg, xingchao001@e.ntu.edu.sg, wptay@ntu.edu.sg).}}
\begin{document}

\maketitle

\section{Introduction} \label{sec:intro}
Graph signal processing (GSP) \cite{Shu13} has emerged as a powerful framework for analyzing graph-structured data, where the central concept is the vector space of graph signals. For a graph $G$ of size $n$, the vector space $\mathbb{R}^n$ of graph signals has enabled the development of a variety of signal processing tools, which have been successfully applied in various domains, such as social network analysis, the study of sensor networks, and transportation network analysis. However, uncertainty is omnipresent in practice, and using a vector to model a real signal can be erroneous in some situations.

To address this challenge, we propose (cf.\ \cite{Ji23d}) to use Wasserstein space \cite{Vil09} as a replacement for the vector space of graph signals, to account for signal stochasticity. The Wasserstein space is strictly more general than the classical graph signal space and provides a more flexible and realistic framework for modeling uncertain signals. An element in the Wasserstein space is called a distributional graph signal in \cite{Ji23d}.

On the other hand, signal processing for a probability space of graphs and operators has also been proposed in \cite{Ji22}. It is integrated with the notion of distributional graph signals in \cite{Ji23d}, and the unified framework that encompasses existing theories regarding graph uncertainty and provides a more comprehensive approach to analyzing graph-structured data. 

The approach is concrete and follows the observation that distributional graph signal transformation can be built up from two types of measurable functions $p: \mathbb{R}^n \times \calG_n \to \mathbb{R}^n$ and $f: \mathbb{R}^n \times \calG_n \to \mathbb{R}^n$, where $\calG_n$ the space of simple weighted graphs on $n$ vertices. The function $p$ is the projection and $f$ factors as $\mathbb{R}^n \times \calG_n \to \mathbb{R}^n\times M_n(\mathbb{R}), (r,G)\mapsto (r,h(G))$ and $\mathbb{R}^n\times M_n(\mathbb{R}) \to \mathbb{R}^n, (r,M) = Mr$, for some measurable function $h: \calG_n \to M_n(\mathbb{R})$. The setup is still restrictive. For example, the size $n$ is used throughout, and hence the change of graph such as graph augmentation is not included as a special case.  

In this paper, we propose a vast generalization for graph signal processing with a categorical perspective. Recall that the category theory \cite{Hun03} deals with the abstract study of structures and relationships between objects. It provides a framework for organizing mathematical concepts and objects. In our case, we shall use categorical language to formalize the notion of signal adaptive graph structures (SAGS) and associated filters introduced in \cite{Ji23d}, as morphisms of the category of correspondences $\calC_{\mathfrak{c}}$ in \cref{sec:tco}. The aim of the paper is to construct subcategories of $\calC_{\mathfrak{c}}$ that have important signal processing significance. We can use the framework to give an abstract and unified perspective of many concrete GSP ideas.  

\section{Abstract graph signals}
As we have mentioned in \cref{sec:intro}, we want to unify the two directions of generalization of the traditional GSP. Recall that if $X$ is a metric space, define $\mathcal{P}(X)$ to be the space of probability measures on $X$ with finite mean and variance, the \emph{$2$-Wasserstein space} on $X$ \cite{Vil09}. In traditional GSP, the signal space is $\mathbb{R}^n$. It embeds isometrically in $\mathcal{P}(\mathbb{R}^n)$ via $r\mapsto \delta_r$, where $\delta_r$ is the delta distribution at $r$. On the other hand, \cite{Ji22} considers a distribution $\mu$ of graph shift operators. It can be interpreted (trivially) as for each graph signal $r$, a distribution of $\mu_r = \mu$ of operators is associated with $r$. To generalize, we consider fiberwise measure as follows.

\begin{Definition}
For measurable spaces $X,Y$, suppose $f:Y \to X$ is \emph{surjective and measurable}. The map $f$ is said to be equipped with \emph{fiberwise probability measure} if for each $x \in X$, there is a probability measure on $\mu_x$ on the fiber $f^{-1}(x)$. For convenience, we may say such an $f$ is an \emph{fpm}. We call $Y$ the \emph{total space} and $X$ the \emph{base space}. 
\end{Definition}

If $f$ has fiberwise probability measure, then for each probability measure $\mu$ on $X$, there is a pullback measure $f^*(\mu)$ on $Y$ defined by \begin{align*} f^*(\mu)(U) = \int_{x\in X} \mu_x\Big(U\cap f^{-1}(x)\Big) d\mu, \end{align*}
for measurable subset $U$ of $Y$. We shall use $f$ with fiberwise probability measure for base space $\mathbb{R}^n$ to model joint information of graphs and signals. For example, if $\Gamma_n$ is the (discrete) space of unweighted graphs on $n$ vertice, then an fpm $p: \mathbb{R}^n\times \Gamma_n \to \mathbb{R}^n, (r,G) \mapsto r$ associates a distribution of graphs $\mu_r$ (on $\Gamma_n$) for each signal $r$.  

As a side remark, in the reverse direction, for measurable $g: X \to Y$ and a probability $\mu$ on $X$, there is a pushforward (or induced) measure $g_*(\mu)$ on $Y$. It is defined by $g_*(\mu)(U) = \mu(g^{-1}(U))$ for any measurable subset $U$ of $Y$. 

Suppose we are given an fpm $f: Y \to X$. A measurable function $g: Y \to Y$ induces a \emph{self-equivalence} of the fpm $f$ if the following holds:
\begin{enumerate}[(a)]
    \item $f = f\circ g$.
    \item For each $x\in X$, let $g_x: f^{-1}(x) \to f^{-1}(x)$ be the restriction of $g$ to the fiber of $x$. Then ${g_x}_*(\mu_x) = \mu_x$, i.e., $\mu_x(U) = \mu_x(g_x^{-1}(U))$ for any measurable $U\subset f^{-1}(x)$.  
    \item If $U$ is measurable in $f^{-1}(x)$, then $g_x(U)$ is also measurable and $\mu_x(U) = \mu_x(g(U))$.
\end{enumerate}

We want to use the above notion of self-equivalence to define equivalences of fpms. 

\begin{Definition}\label{def:tff}
Two fpms $f: Y \to X$ and $f': Y' \to X$ are \emph{fiberwise equivalent} if there are measurable functions $g: Y \to Y'$ and $h: Y' \to Y$ such that $h\circ g$ and $g\circ h$ are self-equivalences of $f$ and $f'$ respectively. The fiberwise equivalence is denoted by $(g,h)$.
\end{Definition}

The intuition is given by the example that $Y \subset Y'$ such that $\bar{Y} = Y'\backslash Y$ is measurable and $\mu'_x(f'^{-1}(x)\cap \bar{Y}) = 0$ for any $x\in X$. In this case, we view $f$ and $f'$ are giving the exact same statistical information and should not distinguish them. 

We now describe an important construction, the fiber product, which will be used in subsequently in subsequent sections. 

Suppose we have fpms $f: Y \to X$ and $f': Y' \to X$ with fiberwise probability measures $\{\mu_x\}_{x\in X}$ and $\{\mu_x'\}_{x\in X}$ respectively. Inspired by the fiber product in algebraic topology \cite{Hat01}, the \emph{fiber product} of $f$ and $f'$ is defined by \begin{align*}Y\times_{X} Y' = \{(y,y') \in Y \times Y' \mid f(y)=f(y')\}.\end{align*} 
As a subspace of $Y\times Y'$, the fiber product $Y\times_{X} Y'$ carries the subspace $\sigma$-algebra of the product $\sigma$-algebra on $Y\times Y'$. The associated surjective measurable map is $h: Y\times_X Y' \to X, (y,y') \mapsto f(y)(=f(y'))$. For $x\in X$, its fiber $h^{-1}(x)$ is $f^{-1}(x) \times {f'}^{-1}(x)$ and we equip it with measure $\mu_x\times \mu_x'$. The fiber product can be depicted by the following diagram: 
\[ \begin{tikzcd}
Y\times_{X}Y' \arrow{r}{} \arrow[swap]{d}{} & Y \arrow{d}{f} \\%
Y' \arrow{r}{f'}& X.
\end{tikzcd}
\] 
In the diagram, the maps from $Y\times_X Y'$ to $Y$ and $Y'$ are induced by respective projections. The condition $f(y) = f(y')$ is equivalent to the condition that the diagram commutes as functions, i.e., the compositions of maps from $Y \times_X Y'$ to $X$ via any path in the diagram are the same. 

The construction remains valid without $f$ and $f'$ being fpms, though the resulting $Y\times_X Y' \to X$ is in general not an fpm. However, it still satisfies the \emph{universal property} (cf.\ topology pullback) that for any measurable $g: Z \to Y$ and $g': Z \to Y'$ such that $f\circ g = f'\circ g'$, there is a unique measurable map $\bar{g}: Z \to Y\times_X Y'$ whose composition with the project to $Y$ (resp. $Y'$) agrees with $g$ (resp. $g'$). Moreover, if one $f,f'$ is an fpm, say $f$, then the projection $Y\times_X Y' \to Y'$ is an fpm, and the fiberwise measures coincide with those described in the following lemma.
  
\begin{Lemma} \label{lem:cap}
Consider the fiber product (with fpms $f,f'$)
\[ \begin{tikzcd}
Y\times_{X}Y' \arrow{r}{g'} \arrow[swap]{d}{g} & Y \arrow{d}{f} \\%
Y' \arrow{r}{f'}& X.
\end{tikzcd}
\]
\begin{enumerate}[(a)]
    \item For each $y\in Y$ and $y'\in Y'$, then ${g'}^{-1}(y) \cong {f'}^{-1}(f(y))$ and $g^{-1}(y') \cong f^{-1}(f'(y'))$.
    \item The surjection $Y\times_{X}Y' \xrightarrow{g} Y'$ has fiberwise measure $\{\mu_{f'(y')}\}_{y'\in Y'}$. 
    \item The diagram is distributional commutative w.r.t.\ $Y'$ and $Y$, i.e., $f^*\circ f'_*(\mu) ={g'_*} \circ g^*(\mu)$ for any probability measure 
 $\mu$ on $Y'$.
\end{enumerate}
\end{Lemma}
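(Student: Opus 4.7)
The approach is to unwind definitions for (a) and (b), and then reduce (c) to a change-of-variables (Fubini-type) identity involving the fpm pullback.

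For (a), I unwind the fiber product directly. A point of ${g'}^{-1}(y)$ is a pair $(y,y')\in Y\times_X Y'$ whose first coordinate is $y$, equivalently a point $y'\in Y'$ with $f'(y')=f(y)$. The assignment $(y,y')\mapsto y'$ is a measurable bijection onto ${f'}^{-1}(f(y))$ with measurable inverse $y'\mapsto (y,y')$, giving ${g'}^{-1}(y)\cong {f'}^{-1}(f(y))$. The argument for $g^{-1}(y')\cong f^{-1}(f'(y'))$ is symmetric.

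For (b), I combine (a) with the fiberwise measure on $h:Y\times_X Y'\to X$ introduced just before the lemma. The fiber $g^{-1}(y')$ sits inside $h^{-1}(f'(y')) = f^{-1}(f'(y'))\times {f'}^{-1}(f'(y'))$ as the slice $f^{-1}(f'(y'))\times\{y'\}$, and under the isomorphism of (a) the product measure $\mu_{f'(y')}\times \mu'_{f'(y')}$ restricted to this slice corresponds to $\mu_{f'(y')}$ on $f^{-1}(f'(y'))$. This both justifies the claim when both $f,f'$ are fpms and recovers the direct definition in the "moreover" clause when only $f$ is assumed to be an fpm.

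For (c), I compute both sides on an arbitrary measurable $U\subset Y$. By the definition of pullback,
\begin{align*}
f^*(f'_*\mu)(U) = \int_{X}\mu_x\bigl(U\cap f^{-1}(x)\bigr)\,d(f'_*\mu)(x).
\end{align*}
On the other side, $g$ is an fpm by (b), so applying its pullback and then pushing forward via $g'$,
\begin{align*}
g'_*(g^*\mu)(U) = g^*\mu\bigl({g'}^{-1}(U)\bigr) = \int_{Y'}\mu_{f'(y')}\bigl({g'}^{-1}(U)\cap g^{-1}(y')\bigr)\,d\mu(y'),
\end{align*}
and the identification from (a) sends ${g'}^{-1}(U)\cap g^{-1}(y') = \{(y,y'):y\in U,\,f(y)=f'(y')\}$ to $U\cap f^{-1}(f'(y'))$. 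Finally the change-of-variables identity $\int_{Y'}\phi\circ f'\,d\mu = \int_X \phi\,d(f'_*\mu)$ applied to $\phi(x) = \mu_x(U\cap f^{-1}(x))$ equates the two expressions, which proves the claim.

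The main thing to watch out for is a measurability hypothesis: one needs $x\mapsto \mu_x(U\cap f^{-1}(x))$ to be measurable so that the defining integral of $f^*$ makes sense, and correspondingly one needs $y'\mapsto \mu_{f'(y')}(\cdot)$ to be measurable for $g^*$ to be well-defined. Both are implicit in the notion of fpm adopted in the paper, since they are exactly what is required for the pullback $f^*\mu$ introduced earlier to be a bona fide probability measure; with these standing assumptions, the rest of the argument is bookkeeping.
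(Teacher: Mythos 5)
Your proposal is correct and follows essentially the same route as the paper's proof: part (a) by the same unwinding of the fiber product, part (b) by identifying the fiber of $g$ with the slice $f^{-1}(f'(y'))\times\{y'\}$ and reading off the induced measure from $\mu_x\times\mu'_x$, and part (c) by the same change-of-variables identity $\int_{Y'}\phi\circ f'\,d\mu=\int_X\phi\,d(f'_*\mu)$, merely phrased with indicator sets instead of the paper's test functions. Your closing remark on the measurability of $x\mapsto\mu_x(U\cap f^{-1}(x))$ correctly identifies a hypothesis the paper leaves implicit in its definition of fpm.
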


\begin{proof}
For each $y\in Y$, $g'^{-1}(y) = \{(y,y') \mid f(y)=f'(y')\}$. Therefore, the condition for $(y,y')$ holds if and only if $y'\in {f'}^{-1}(f(y))$. As the first component of $g'^{-1}(y)$ is fixed, therefore $g'^{-1}(y) \cong  {f'}^{-1}(f(y))$. Similarly, we have $g^{-1}(y') \cong f^{-1}(f'(y'))$. 

Notice that as $f: Y \to X$ is surjective, so does the map $g: Y\times_X Y' \to Y'$. The fiberwise measure of $f'\circ g: Y\times_X Y' \to X$ is $\mu_x\times \mu_x'$. For $y' \in Y'$, its fiber is $f^{-1}(f'(y'))\times \{y'\}$. Its measure is the marginal of $\mu_{f'(x')}\times \mu_{f'(x')}'$ at $y'$, which is $\mu_{f'(y')}$.

To show $f^*\circ f'_*(\mu) ={g'_*} \circ g^*(\mu)$, consider any test function $\alpha$ on $Y$. We have
\begin{align*}
    \big(f^*\circ f'_*(\mu)\big)(\alpha) & = \int_{y\in Y}\alpha(y) d\big(f^*\circ f'_*\big)(\mu) \\
    & = \int_{x\in X}\int_{y\in f^{-1}(x)}\alpha(y)d\mu_x df'_*(\mu) \\
    & = \int_{y'\in Y'}\int_{y\in f^{-1}(f'(y'))} \alpha(y) d\mu_{f^{-1}(f'(y'))}d\mu \\
    & = \int_{y'\in Y'}\int_{y\in f^{-1}(f'(y'))} \alpha(y)  d\mu_{g^{-1}(y')}d\mu \\
    & = \int_{(y,y')\in Y\times_X Y'} \alpha(y) dg^*(\mu) \\
    & = \int_{y\in Y} \alpha(y) d \big(g'_*\circ g^*\big)(\mu) = (g'_*\circ g^*\big)(\mu)(\alpha).
\end{align*}
The result follows.
\end{proof}

\section{The category of correspondences and its subcategories} \label{sec:tco}

As we have pointed out in the previous section, we use the notion of fpm to model graph and signal information. In this section, we demonstrate how to fit the concept to form a category $\calC_{\mathfrak{c}}$, the \emph{category of correspondences}, in which the morphisms correspond to filters in traditional GSP. 

The objects of $\calC_{\mathfrak{c}}$ are measurable spaces. A \emph{correspondence} $\mathfrak{c} = (f,f')$ from an object $X$ to another $X'$ is given by the following data: an fpm $f: Y \to X$ and a measurable function $f': Y \to X'$ such that for any $\mu \in \calP(X)$, we have \begin{align*}\mathfrak{c}(\mu):= f'_*\circ f^*(\mu)\in \calP(X').\end{align*} We denote it by $(\mathfrak{c},f,f')$ or simply $\mathfrak{c}$ if $f$ and $f'$ are clear from the context. A correspondence is essentially depicted by the diagram $X \xleftarrow{f} Y \xrightarrow{f'} X'$. We call the space $Y$ the \emph{total space} of the correspondence $\mathfrak{c}$. 

For measurable spaces $X,X'$, two correspondences $\mathfrak{c}=(f,f')$ and $\mathfrak{d}=(g,g')$ (with total spaces $Y$ and $Z$ respectively) are equivalent if there is a fiberwise equivalence $(h,h')$ with $h: Y \to Z, h': Z \to Y$ (cf.\ \cref{def:tff}) such that $f' = g'\circ h$ and $g' = f'\circ h'$. The condition can be summarized by the following commutative diagram. 
\[ \begin{tikzcd}
& Y \arrow[swap]{dl}{f} \arrow[swap]{d}{h} \arrow{dr}{f'} & \\
X & Z \arrow[swap]{l}{g} \arrow[swap]{d}{h'} \arrow{r}{g'} & X' \\
& Y \arrow{ul}{f} \arrow[swap]{ur}{f'} & 
\end{tikzcd}
\] 

Given objects $X$ and $X'$ in $\calC_{\mathfrak{c}}$, the morphism $\text{Mor}(X,X')$ consists of correspondences $\mathfrak{c}$ from $X$ to $X'$ up to equivalence. We claim that this notion of morphism makes $\calC_{\mathfrak{c}}$ a category. 

\begin{Theorem} \label{thm:cic}
$\calC_{\mathfrak{c}}$ is a category.
\end{Theorem}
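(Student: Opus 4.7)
The plan is to verify the four standard category axioms: existence of identity morphisms, closure of morphisms under composition, the unit laws, and associativity of composition; and, since morphisms are defined as equivalence classes, I must also show all operations descend to equivalence classes. The identity morphism on an object $X$ is the correspondence $X \xleftarrow{\mathrm{id}_X} X \xrightarrow{\mathrm{id}_X} X$, where the identity map is trivially an fpm (each fiber is a singleton carrying the unique probability measure). Since both pullback and pushforward along $\mathrm{id}_X$ act as the identity on $\calP(X)$, this correspondence induces the identity transformation $\mu \mapsto \mu$.

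For composition, given correspondences $\mathfrak{c}=(f,f')\colon X \to X'$ with total space $Y$ and $\mathfrak{d}=(g,g')\colon X' \to X''$ with total space $Z$, I would form the fiber product $Y\times_{X'} Z$ along $f'$ and $g$, and define the composite by
\[
X \xleftarrow{f \circ p_Y}  Y\times_{X'} Z \xrightarrow{g' \circ p_Z} X'',
\]
where $p_Y,p_Z$ are the canonical projections. Since $g$ is an fpm, the side remark preceding \cref{lem:cap} gives that $p_Y$ is an fpm, and composition of fpms is an fpm, so $f \circ p_Y$ is an fpm with fiberwise measures obtained by integrating the $\mu'$-fibers of $g$ against the $\mu$-fibers of $f$. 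To verify that the induced measure map is what one expects, I would apply \cref{lem:cap}(c) to the square with corners $Y, Z, X', Y\times_{X'} Z$: for any $\nu\in\calP(X)$, setting $\nu' = f^*(\nu)\in\calP(Y)$, the lemma yields $g^*\circ f'_*(\nu') = p_{Z*}\circ p_Y^*(\nu')$, which gives
\[
\mathfrak{d}(\mathfrak{c}(\nu)) = g'_* \circ g^* \circ f'_* \circ f^*(\nu) = (g'\circ p_Z)_* \circ (f\circ p_Y)^*(\nu),
\]
confirming that the composite correspondence sends measures to measures via $\mathfrak{d}\circ\mathfrak{c}$.

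The unit laws follow because the fiber product $Y \times_{X'} X'$ along $f'$ and $\mathrm{id}_{X'}$ is canonically fiberwise equivalent to $Y$ (via the projection $p_Y$ and the section $y \mapsto (y, f'(y))$), and similarly on the left. Associativity follows from the universal property of the fiber product: given three correspondences with total spaces $Y_1, Y_2, Y_3$, both bracketings yield the iterated fiber product $Y_1 \times_{X_2} Y_2 \times_{X_3} Y_3$, and the fiberwise measures match because they are obtained by stacking the same fpm fibers in the same order; the canonical isomorphism between the two bracketings is a fiberwise equivalence in the sense of \cref{def:tff}.

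The main obstacle is verifying that composition descends to equivalence classes, since fiber products are the source of subtle bookkeeping. Suppose $\mathfrak{c}_1 \sim \mathfrak{c}_2$ via $(h,h')$ on total spaces $Y_1,Y_2$, and $\mathfrak{d}_1 \sim \mathfrak{d}_2$ via $(k,k')$ on total spaces $Z_1,Z_2$. I would use the universal property of the fiber product to produce induced maps $H \colon Y_1\times_{X'}Z_1 \to Y_2 \times_{X'} Z_2$ and $H'$ in the reverse direction from the pairs $(h,k)$ and $(h',k')$, taking advantage of the fact that the projection equalities $f_1' = f_2'\circ h$ and $g_1 = g_2 \circ k$ ensure the required commutativities with $f_2'$ and $g_2$. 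Then I must verify that $H'\circ H$ and $H\circ H'$ are self-equivalences of the respective fpms $f \circ p_Y$: conditions (a) and the measure-preservation conditions (b), (c) of the self-equivalence definition follow fiberwise by applying the known self-equivalence properties of $h'\circ h$ and $k'\circ k$ to the product fibers, and the compatibility with the outer maps $f$ and $g'$ on each side is exactly what the equivalence hypothesis guarantees. This yields a fiberwise equivalence witnessing $\mathfrak{d}_1\circ\mathfrak{c}_1 \sim \mathfrak{d}_2\circ\mathfrak{c}_2$, completing the proof.
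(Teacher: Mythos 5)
Your proof follows essentially the same route as the paper: composition via the fiber product $Y\times_{X'}Z$ with the fpm structure and measure-compatibility supplied by \cref{lem:cap}, identities via $(\mathrm{id}_X,\mathrm{id}_X)$ and the canonical identification $Y\times_{X'}X'\cong Y$, and associativity via the universal property of iterated fiber products. Your additional check that composition descends to equivalence classes of correspondences is a point the paper's proof passes over in silence, and your sketch of it (inducing $H,H'$ from the universal property and verifying the self-equivalence conditions fiberwise) is correct and a worthwhile supplement.
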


\begin{proof}
    To show that $\calC_{\mathfrak{c}}$ is a category, we need to first describe how to compose morphisms. Suppose we have correspondences $\mathfrak{c_1}=(f_1,f_1') \in \text{Mor}(X_1,X_2)$ and $\mathfrak{c_2}=(f_2,f_2') \in \text{Mor}(X_2,X_3)$, where $f_1$ and $f_2$ are fpms with total spaces $Y_1,Y_2$ respectively. The composition $\mathfrak{c}_2\circ \mathfrak{c}_1$ has total space the fiber product $Z = Y_1\times_{X_2}Y_2$ of $f_1'$ and $f_2$. As shown in the commutative diagram below, the fiber product gives rise to function $g: Z \to Y_1$ and $g': Z \to Y_2$ as projections to the respective components. Consider the composition $h = g\circ f_1: Z \to X_1$. For $x \in X_1$, we have a probability measure $\mu_x$ on $f_1^{-1}(x)$ as $f_1$ is an fpm. On the other hand, by \cref{lem:cap}, $g: h^{-1}(x) \to f_1^{-1}(x)$ is an fpm. Therefore, pulling back via $g$ induces the probability measure $g^*(\mu_x)$ on $h^{-1}(x) = (g\circ f_1)^{-1}(x)$. As a result, $h: Z \to X_1$ is an fpm with total space $Z$ and base space $X_1$. 
\[ \begin{tikzcd}
Z = Y_1\times_{X_2}Y_2 \arrow{r}{g'} \arrow{d}{g} & Y_2 \arrow{r}{f_2'} \arrow{d}{f_2} & X_3 \\%
Y_1 \arrow{r}{f_1'} \arrow{d}{f_1} & X_2\\
X_1.
\end{tikzcd}
\]
Denote the composition $f_2'\circ g'$ by $h'$. Then we claim that $\mathfrak{c} = (h,h')$ is a correspondence and is defined as the composition $\mathfrak{c}_2\circ \mathfrak{c}_1$. We need to verify that for $\mu\in \calP(X_1)$, $\mathfrak{c}(\mu) \in \calP(X_3)$. For this, by \cref{lem:cap}, $\mathfrak{c}(\mu) = \mathfrak{c}_2\big(\mathfrak{c}_1(\mu)\big) \in \calP(X_3)$ as $\mathfrak{c}_1(\mu)$ is in $\calP(X_2)$. 

For any object $X$, the identity map $I: X \to X$ is trivially an fpm as $I^{-1}(x) = \{x\}, x\in X$ is a singleton set. In $\calC_{\mathfrak{c}}$, the identity morphism for $X$ is $\mathfrak{i} = (I,I)$. To see that $\mathfrak{c}\circ \mathfrak{i} = \mathfrak{c}$ (resp. $\mathfrak{i}\circ \mathfrak{c}' = \mathfrak{c}'$) for $\mathfrak{c} \in \text{Mor}(X,X')$ (resp. $\mathfrak{c}' \in \text{Mor}(X',X)$), it suffices to use the fact that: given $f: X'\to X$, we have canonical homeomorphisms $X'\times_X X \cong X\times_X X' \cong X', (x',x)\mapsto (x,x')\mapsto x'$ with $f(x')=x$. 

Lastly, we need to show that $\mathfrak{c}_3\circ (\mathfrak{c}_2\circ \mathfrak{c}_1)= (\mathfrak{c}_3\circ \mathfrak{c}_2)\circ \mathfrak{c}_1$ for $\mathfrak{c}_i=(f_i,f_i') \in \text{Mor}(X_i,X_{i+1})$ with total space $Y_i$. The total space of $\mathfrak{c}_3\circ (\mathfrak{c}_2\circ \mathfrak{c}_1)$ and $(\mathfrak{c}_3\circ \mathfrak{c}_2)\circ \mathfrak{c}_1$ are $(Y_1\times_{X_2}Y_2)\times_{X_3} Y_3$ and $Y_1\times_{X_2}(Y_2\times_{X_3} Y_3)$ respectively, as constructed from repeatedly taking fiber products shown in \figref{fig:comp}. By using the universal property of fiber product, there is $h: (Y_1\times_{X_2}Y_2)\times_{X_3}Y_3 \to Y_1\times_{X_2}(Y_2\times_{X_3} Y_3)$, whose formula is given by $(y_1,y_2,y_3)\mapsto (y_1,y_2,y_3)$. Similarly, there is $h': Y_1\times_{X_2}(Y_2\times_{X_3} Y_3) \to (Y_1\times_{X_2}Y_2)\times_{X_3}Y_3$. The compositions $h\circ h'$ and $h'\circ h$ are both the identity on their respective domains. Therefore $\mathfrak{c}_3\circ (\mathfrak{c}_2\circ \mathfrak{c}_1)$ and $(\mathfrak{c}_3\circ \mathfrak{c}_2)\circ \mathfrak{c}_1$ are equivalent, and hence are the same in $\calC_{\mathfrak{c}}$. 
\begin{figure}
    \centering
    \includegraphics[scale=0.7]{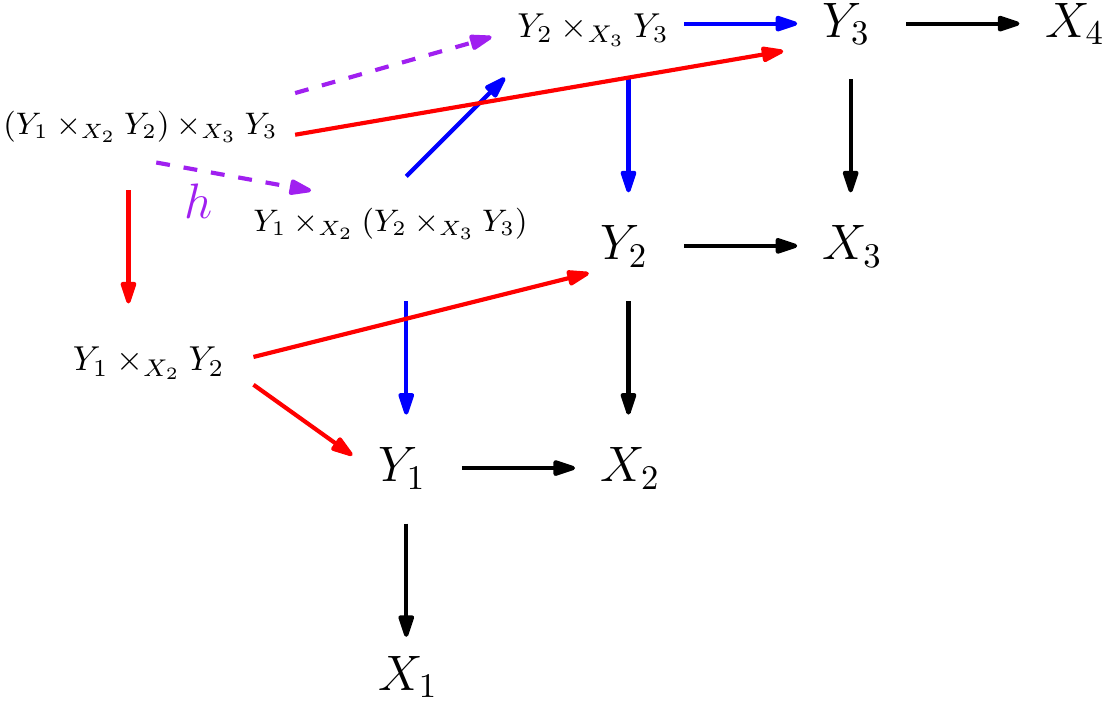}
    \caption{The black arrows are from $\mathfrak{c}_i = (f_i,f_i'), i=1,2,3$. The blue (resp. red) arrows are in the fiber products in forming $Y_1\times_{X_2}(Y_2\times_{X_3} Y_3)$ (resp. $(Y_1\times_{X_2}Y_2)\times_{X_3} Y_3$). A diagram chase, using the universal property of fiber product, we obtain the map $h$.}
    \label{fig:comp}
\end{figure}
\end{proof}

For any collection of measurable spaces $X$, we can define the (full) subcategory $\calC_{c,X}$ of $\calC_{\mathfrak{c}}$ whose objects are in $X$. 

\begin{Lemma} \label{lem:fxx}
For objects $X,X'$ such that $X'$ is a commutative monoid whose binary operation is measurable, the morphism set $\text{Mor}(X,X')$ is an abelian monoid with addition denoted by $+$.
\end{Lemma}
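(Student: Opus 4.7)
The plan is to define the sum of two correspondences $\mathfrak{c}_1=(f_1,f_1')$ and $\mathfrak{c}_2=(f_2,f_2')$ from $X$ to $X'$ (with total spaces $Y_1$ and $Y_2$) by forming the fiber product $Z = Y_1\times_X Y_2$ of the fpms $f_1$ and $f_2$. The canonical map $h: Z \to X$ sending $(y_1,y_2)\mapsto f_1(y_1)=f_2(y_2)$ is an fpm, with fiber $h^{-1}(x) = f_1^{-1}(x)\times f_2^{-1}(x)$ carrying the product measure $\mu_x^{(1)}\times \mu_x^{(2)}$. Using the measurable monoid operation $m: X'\times X' \to X'$, define $h': Z \to X'$ by $h'(y_1,y_2) = f_1'(y_1) + f_2'(y_2)$, which is measurable as a composition of measurable maps. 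I would then set $\mathfrak{c}_1 + \mathfrak{c}_2 := (h,h')$ and take the zero element to be $\mathfrak{e} = (\Id_X, e_X)$, where $e_X: X \to X'$ is the constant map at the identity $e$ of $X'$.

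The first verification is that $(h,h')$ is indeed a correspondence, i.e., $h'_*\circ h^*(\mu)\in \calP(X')$ for every $\mu\in \calP(X)$. This is immediate from the definition of the pullback along an fpm combined with the measurability of $m$. The second point is that the operation descends to equivalence classes: if $(g_1,k_1)$ and $(g_2,k_2)$ are fiberwise equivalences relating $\mathfrak{c}_i$ to $\mathfrak{d}_i$ ($i=1,2$), then the restrictions of $g_1\times g_2$ and $k_1\times k_2$ to the respective fiber products provide the required fiberwise equivalence between $\mathfrak{c}_1+\mathfrak{c}_2$ and $\mathfrak{d}_1+\mathfrak{d}_2$, because product measures are preserved fiberwise by fiberwise-measure preserving maps.

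The monoid axioms are then established through natural isomorphisms of fiber products together with the corresponding properties of $m$. For commutativity, the swap $(y_1,y_2)\mapsto (y_2,y_1)$ gives a measurable bijection $Y_1\times_X Y_2 \cong Y_2\times_X Y_1$ preserving the product measures on fibers; combined with commutativity of $m$, this yields an equivalence of correspondences. For associativity, the canonical bijection $(Y_1\times_X Y_2)\times_X Y_3 \cong Y_1\times_X (Y_2\times_X Y_3)$ used in \cref{thm:cic} is fiberwise measure preserving by Fubini, and under this identification the two iterated $h'$ maps coincide because $m$ is associative. For the identity, the canonical isomorphism $Y\times_X X \cong Y$ (already used in \cref{thm:cic}) identifies $\mathfrak{c} + \mathfrak{e}$ with $\mathfrak{c}$, since $f'(y) + e = f'(y)$.

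The main obstacle is bookkeeping rather than any single difficult step: one must consistently check that each canonical set-theoretic identification of fiber products qualifies as a fiberwise equivalence in the strong sense of \cref{def:tff}, i.e., that the underlying maps are measurable and the product measures on the fibers agree under the identification. This rests on Fubini's theorem (for the swap and associativity of products of fiber measures) and on the measurability of $m$ (for the map $h'$ and for its compatibility with the identifications). Once these are checked in turn, the monoid axioms follow mechanically from the monoid structure of $X'$.
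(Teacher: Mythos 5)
Your construction is essentially identical to the paper's: the sum is defined via the fiber product $Y_1\times_X Y_2$ with product fiberwise measures, the second map is $(y_1,y_2)\mapsto f_1'(y_1)+f_2'(y_2)$, the zero element is $(\Id_X, e_X)$, and commutativity/associativity follow from the swap and the canonical reassociation of fiber products as self-equivalences. Your additional check that $+$ descends to equivalence classes is a sensible refinement the paper leaves implicit, but the route is the same.
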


\begin{proof}

Consider $\mathfrak{c}_1 = (f_1,f_1')$ with total space $Y_1$ and $\mathfrak{c}_2 = (f_2,f_2')$ with total space $Y_2$. Then $\mathfrak{c}_1 + \mathfrak{c}_2$ is defined by the following diagram:
\[ \begin{tikzcd}
g_2: Z = Y_1\times_{X}Y_2 \arrow{r} \arrow{d}{g_1} & X'\times X' \arrow{r}{+} & X' \\%
X.
\end{tikzcd}
\]
In the diagram, we have the following description of maps and spaces:
\begin{itemize}
    \item $Z = Y_1\times_{X} Y_2$ is the fiber product w.r.t.\ $f_1,f_2$.
    \item $g_1: Z \to X, (y_1,y_2)\mapsto f_1(y_1)=f_2(y_2) \in X$, and it is an fpm with fiberwise measures $\{\mu_{1,x}\times \mu_{2,x}\}_{x\in X}$ with $\{\mu_{i,x}\}_{x\in X}$ the fiberwise measures for $f_i, i=1,2$.
    \item $g_2: Z \to X'\times X' \to X', (y_1,y_2)\mapsto (f_1'(y_1),f_2'(y_2)) \mapsto f_1'(y_1)+f_2'(y_2)$.
\end{itemize}
It is straightforward to check that $\text{Mor}(X,X')$ with $+$ is a monoid. For example, if $I: X \to X$ is the identity fpm and $0: X \to X'$ the zero map, then $\mathfrak{o}=(I,0)$ is the zero element of $\text{Mor}(X,X')$. It is commutative because $Y_1\times_{X}Y_2 \to Y_2\times_{X}Y_1, (y_1,y_2)\mapsto (y_2,y_1)$ defines a self-equivalence. Associativity is similarly verified by a self-equivalence as in the proof of \cref{thm:cic} (cf. \figref{fig:comp}).  
\end{proof}

\begin{Definition} \label{defn:ici}
    If $\calC'$ is a subcategory of $\calC_{\mathfrak{c}}$, we say that it is closed under addition if for any $X,X' \in \calC'$ with $X'$ an abelian monoid, then $\text{Mor}_{\calC'}(X,X')$ is closed under $+$ in $\text{Mor}(X,X')$.
\end{Definition}

\section{Linear filters}

In this section, we define what we mean by linearity in the categorical setup. The objects we shall consider are finite dimensional vector spaces $\mathbb{R}^n$. The most intuitive form of linear filters are correspondences of the form $\mathfrak{c}=(p,\times)$ where $p: \mathbb{R}^m \times M_{m,n}(\mathbb{R}) \to \mathbb{R}^m$ where $p$ is the projection to the first component, and $\times: \mathbb{R}^m \times M_{m,n}(\mathbb{R}) \to \mathbb{R}^n$ is the multiplication $(r,M)\mapsto Mr$. However, these correspondences are insufficient to form a category as they do not respect compositions, and we need to enlarge the morphism set.

\begin{Definition}\label{defn:acb}
    A correspondence between from $\mathbb{R}^m$ to $\mathbb{R}^n$ is a \emph{linear filter} if it is equivalent to $\mathfrak{c}=(f,f'): \mathbb{R}^{n_1} \to \mathbb{R}^{n_2}$ with total space $X$ of the form: 
\[ \begin{tikzcd}
X \arrow{r}{g} \arrow{d}{f} & \mathbb{R}^{m}\times M_{m,{n_2}}(\mathbb{R}) \arrow{d}{p} \arrow{r}{\times} & \mathbb{R}^{n_2}  \\
\mathbb{R}^{n_1} \arrow{r}{T} & \mathbb{R}^m,
\end{tikzcd}
\]
such that $T$ is a linear transformation and $f'=\times \circ g$, and moreover, the diagram commutes.
\end{Definition}

We claim that the collection $\calC_{\mathfrak{l}}$ of finite dimensional vector spaces and linear filters as defined in \cref{defn:acb} form a subcategory of $\calC_{\mathfrak{c}}$. To show this, we prove a slightly more general result, which can be used to show that other morphism collections form subcategories of $\calC_{\mathfrak{c}}$. 

Assume that we have a subset $\calO$ of the set of measurable spaces, i.e., objects of $\calC_{\mathfrak{c}}$. Consider the following setup:
\begin{Assumption} \label{assu:tia}
\begin{itemize}
    \item There is a collection of measurable functions $\calF$ between objects in $\calO$. 
    \item For each pair of objects $X_1,X_2 \in \calO$, there is a subset $\mathcal{M}_{X_1,X_2}$ of measurable functions from $X_1$ to $X_2$, such that $\calM_{X_1,X_2}$ is a measurable space. Moreover, if $X_1,X_2,X_3 \in \calO$ and $f_1 \in \calM_{X_1,X_2}, f_2 \in \calM_{X_2,X_3}$, then $f_2\circ f_1 \in \calM_{X_1,X_3}$. Moreover, for any $X_1\in \calO$, $\calM_{X_1,X_1}$ contains the identity morphism. This makes $\calO$ a subcategory of the category of measurable spaces. 
    \item For any $f \in \calF\cap \calM_{X_2,X_3}$, $f_1 \in \calM_{X_1,X_2}, f_2 \in \calM_{X_3,X_4}$, the composition $f_2\circ f\circ f_1$ is in $\calM_{X_1,X_4}$.
    \item There is a property $\bP$ of measurable spaces such that the following holds: if $\calM_{X_1,X_2}$ and $\calM_{X_3,X_4}$ have property $\bP$ and $f: X_2\to X_3 \in \calF$, then the map $f_*: \calM_{X_1,X_2} \times \calM_{X_3,X_4} \to \calM_{X_1,X_4}, (f_1,f_2)\mapsto f_2\circ f \circ f_1$ makes $\calM_{X_1,X_4}$ have property $\bP$.
\end{itemize}
\end{Assumption}

In addition, we may also consider the following additional condition regarding $\bP$ in \cref{assu:tia}.

\begin{Assumption} \label{assu:sxx}
If $f_1: X \to X_1$ and $f_2: X \to X_2$ are in $\calF$, then so is $f_1\times f_2: X \to X_1\times X_2$. 

Moreover, suppose $X_1,X_2,X_3$ are in $\calO$ and $X_3$ is an abelian monoid. If $\calM_{X_1,X_3}$ and $\calM_{X_2,X_3}$ have property $\bP$, then there is a morphism $\phi_{X_1,X_2,X_3}: \calM_{X_1,X_3}\times \calM_{X_2,X_3} \to \calM_{X_1\times X_2,X_3}$ such that
\begin{itemize}
    \item $\phi_{X_1,X_2,X_3}$ makes $\calM_{X_1\times X_2,X_3}$ have property $\bP$.
    \item For $f_1 \in \calM_{X_1,X_3}, f_2 \in \calM_{X_2,X_3}$ and $x_1\in X_1, x_2\in X_2$, we have 
   \begin{align*} \phi_{X_1,X_2,X_3}\big((f_1,f_2)\big)\big((x_1,x_2)\big) = f_1(x_1)+f_2(x_2).
   \end{align*}
\end{itemize}

\end{Assumption}

\begin{Lemma} \label{lem:wta}
Under \cref{assu:tia}, let $\calC_{\calO}$ consist of objects $\calO$. A morphism between $X_1,X_2\in \calO$ is any correspondence equivalent to the form $\mathfrak{c} = (h,e\circ g)$ with commuting square:
 \[ \begin{tikzcd}
Y \arrow{r}{g} \arrow{d}{h} & X_2\times \calM_{X_2,X_3} \arrow{d}{p} \arrow{r}{e} & X_3  \\
X_1 \arrow{r}{f} & X_2,
\end{tikzcd}
\]
such that $f\in \calF$, $\calM_{X_2,X_3}$ has property $\bP$, $p$ is the projection, and $e$ is the evaluation map $e(x,l) = l(x)$. Then $\calC_{\calO}$ is a subcategory of $\calC_{\mathfrak{c}}$. 

If in addition \cref{assu:sxx} holds, then $\calC_{\calO}$ is closed under addition in the sense of \cref{defn:ici}. 
\end{Lemma}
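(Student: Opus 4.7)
My plan is to check directly that $\calC_\calO$ satisfies the two category axioms needing verification inside $\calC_\mathfrak{c}$ (contains identities and is closed under composition), then handle the addition claim by an analogous construction using \cref{assu:sxx}. The consistent tactic is to take the composition (or sum) already constructed in $\calC_\mathfrak{c}$ by \cref{thm:cic} (resp.\ \cref{lem:fxx}) and exhibit concrete data $(Y, h, g, f)$ realizing it in the prescribed form of the lemma.

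\textbf{Identities and composition.} For $X\in\calO$, I would realize the identity correspondence $(I_X, I_X)$ in the prescribed form by taking $X_1 = X_2 = X_3 = X$, $Y = X$, $h = f = I_X$, and $g\colon X\to X\times \calM_{X,X}$, $x\mapsto (x, I_X)$; then $e\circ g = I_X$ and the square commutes trivially. For composition, suppose $\mathfrak{c}_1\in\text{Mor}(X_1, X_3)$ and $\mathfrak{c}_2\in\text{Mor}(X_3, X_5)$ are of the prescribed form, with data $(Y_i, h_i, g_i, f_i)$ and intermediates $X_2, X_4$. By \cref{thm:cic}, $\mathfrak{c}_2\circ\mathfrak{c}_1$ has total space $Z = Y_1\times_{X_3} Y_2$ with fpm $h = h_1\circ\pi_{Y_1}\colon Z\to X_1$. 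I would retain $X_2$ as the intermediate with structure map $f_1\in\calF$, and define
\[
g\colon Z\to X_2\times\calM_{X_2, X_5}, \qquad (y_1, y_2)\mapsto \bigl(x_2,\, l_2\circ f_2\circ l_1\bigr),
\]
where $g_1(y_1) = (x_2, l_1)$ and $g_2(y_2) = (x_4, l_2)$. The third bullet of \cref{assu:tia} places $l_2\circ f_2\circ l_1$ in $\calM_{X_2, X_5}$, and the fourth bullet, applied with $f_2\in\calF$, endows $\calM_{X_2, X_5}$ with property $\bP$. A short diagram chase, leveraging the fiber-product relation $l_1(x_2) = h_2(y_2)$, verifies both that $p\circ g = f_1\circ h$ and that $e\circ g = e_2\circ g_2\circ\pi_{Y_2}$, so the resulting correspondence agrees on the nose with $\mathfrak{c}_2\circ\mathfrak{c}_1$.

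\textbf{Closure under addition.} Given $\mathfrak{c}_1, \mathfrak{c}_2\colon X_1\to X_3$ of the prescribed form with intermediates $X_{2,1}, X_{2,2}$ and structure maps $f_i\in\calF$, the sum of \cref{lem:fxx} has total space $Z = Y_1\times_{X_1}Y_2$ and sends $(y_1, y_2)$ to $(e_1\circ g_1)(y_1) + (e_2\circ g_2)(y_2)\in X_3$. I would take the new intermediate to be $X_2 := X_{2,1}\times X_{2,2}$ with $f := f_1\times f_2\in\calF$ (first part of \cref{assu:sxx}), and set
\[
g\colon Z\to X_2\times\calM_{X_2, X_3}, \qquad (y_1, y_2)\mapsto \bigl((x_{2,1}, x_{2,2}),\, \phi_{X_{2,1}, X_{2,2}, X_3}(l_1, l_2)\bigr),
\]
with $g_i(y_i) = (x_{2,i}, l_i)$ and $\phi$ the pairing map of the second part of \cref{assu:sxx}, which also gives $\calM_{X_2, X_3}$ property $\bP$. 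The defining identity for $\phi$ then yields $e\circ g(y_1, y_2) = l_1(x_{2,1}) + l_2(x_{2,2})$, matching the $f'$-component of $\mathfrak{c}_1+\mathfrak{c}_2$ exactly.

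\textbf{Main obstacle.} I expect the main subtlety to be the identity case: realizing $(I_X, I_X)$ strictly in the prescribed form requires $\calM_{X,X}$ to carry property $\bP$, which \cref{assu:tia} does not explicitly guarantee. I would either note this as a harmless standing assumption (it holds in every intended application, e.g.\ for linear filters where $\calM_{X,X}$ is a matrix algebra) or simply adjoin identity morphisms to $\calC_\calO$ by fiat. Beyond this, the work is a diagram chase, with all real content already packaged in the third and fourth bullets of \cref{assu:tia} (closure under composition through $\calF$ and its preservation of property $\bP$) and in the $\phi$ construction of \cref{assu:sxx} for addition.
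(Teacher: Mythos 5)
Your proposal is correct and follows essentially the same route as the paper: realize the identity with $Y=X_2=X_3=X_1$, realize the composite on the fiber product $Y_1\times_{X_3}Y_2$ with second leg $(y_1,y_2)\mapsto(x_2,\,l_2\circ f_2\circ l_1)$ using the third and fourth bullets of \cref{assu:tia}, and realize the sum on $Y_1\times_{X_1}Y_2$ via $f_1\times f_2$ and $\phi_{X_{2,1},X_{2,2},X_3}$ from \cref{assu:sxx}; your explicit diagram chase confirming $e\circ g = (e\circ g_2)\circ\pi_{Y_2}$ is exactly what the paper leaves implicit. Your flagged obstacle about $\calM_{X,X}$ needing property $\bP$ for the identity is a real gap that the paper's own proof also silently elides, and your proposed fixes (a standing assumption, which holds vacuously for $\calC_{\mathfrak{l}}$, or adjoining identities) are both acceptable.
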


\begin{proof}
If $X_1\in \calO$, then $\calC_{\calO}$ contains the identity morphism by choosing $Y = X_2 = X_3 := X_1$ and $h = f = id_{X_1}$, where $id_{X_1}:X_1 \to X_1$ is the identity map.

Suppose we are given diagrams
    \[ \begin{tikzcd}
Y_1 \arrow{r}{g_1} \arrow{d}{h_1} & X_2\times \calM_{X_2,X_3} \arrow{d}{p} \arrow{r}{e} & X_3  \\
X_1 \arrow{r}{f_1} & X_2
\end{tikzcd}
\]
and 
\[ \begin{tikzcd}
Y_2 \arrow{r}{g_2} \arrow{d}{h_2} & X_4\times \calM_{X_4,X_5} \arrow{d}{p} \arrow{r}{e} & X_5  \\
X_3 \arrow{r}{f_2} & X_4. 
\end{tikzcd}
\]
Then their composition in $\calC_{\mathfrak{c}}$ is given by the following diagram
\[ \begin{tikzcd}
Y_1\times_{X_3} Y_2 \arrow{r}{g_3} \arrow{d}{f_3} & X_2 \times \calM_{X_2,X_5} \arrow{d}{p} \arrow{r}{e} & X_5  \\
X_1 \arrow{r}{f_1} & X_2.
\end{tikzcd}
\]
If $g_1(y_1) = (x_2,l_2)$ and $g_2(y_2)=(x_4,l_4)$, then in the above diagram, we have $f_3\big((y_1,y_2)\big) = h_1(y_1)$ and $g_3\big((y_1,y_2)\big) = \big(f_1\circ h_1(y_1), l_4\circ f_2\circ l_2\big)$. By our setup, $\calM_{X_2,X_5}$ has property $\bP$ and the composition of morphisms remains a morphism. Therefore, $\calC_{\calO}$ is a subcategory of $\calC_{\mathfrak{c}}$. 

Suppose \cref{assu:sxx} holds. Suppose we have diagrams for morphisms $\mathfrak{c}_1$ and $\mathfrak{c}_2$ in $\text{Mor}_{\calC_{\calO}}(X_1,X_3)$:
\[ \begin{tikzcd}
Y_1 \arrow{r}{g_1} \arrow{d}{h_1} & X_2\times \calM_{X_2,X_3} \arrow{d}{p} \arrow{r}{e} & X_3  \\
X_1 \arrow{r}{f_1} & X_2
\end{tikzcd}
\]
and 
\[ \begin{tikzcd}
Y_2 \arrow{r}{g_2} \arrow{d}{h_2} & X_4\times \calM_{X_4,X_3} \arrow{d}{p} \arrow{r}{e} & X_3  \\
X_1 \arrow{r}{f_2} & X_4. 
\end{tikzcd}
\]
We have the following diagram for addition with $X_{2,4} = X_2\times X_4$, $\calM = \calM_{X_2,X_3}\times \calM_{X_4,X_3}$ and $\phi = \phi_{X_2,X_4,X_3}$:
\[ \begin{tikzcd}
Y_1\times_{X_1}Y_2 \arrow{r} \arrow{d} & X_{2,4}\times \calM \arrow{r}{id\times \phi}& X_{2,4}\times \calM_{X_{2,4},X_3}\arrow{d}{p} \arrow{r}{e} & X_3  \\
X_1 \arrow{r}{=} & X_1\arrow{r}{f_1\times f_2} & X_{2,4}. &
\end{tikzcd}
\]
By \cref{assu:sxx}, $f_1\times f_2\in \calF$ and $\calM_{X_{2,4},X_3}$ has property $\bP$, and therefore $\mathfrak{c}_1+\mathfrak{c}_2$ is in $\text{Mor}_{\calC_{\calO}}(X_1,X_3)$ as claimed.
\end{proof}

Let $\calO$ consist of finite dimensional real vector spaces and $\calF$ be the set of linear transformations. For two vector spaces $X_1 = \mathbb{R}^{n_1}$ and $X_2 = \mathbb{R}^{n_2}$, the space $\calM_{X_1,X_2}$ is the space of $n_1$-by-$n_2$ matrices $M_{n_1,n_2}(\mathbb{R})$. Moreover, the map $M_{n_1,n_3}(\mathbb{R})\times M_{n_2,n_3}(\mathbb{R}) \to M_{n_1+n_2,n_3}(\mathbb{R})$ is given by $(M_1,M_2)\mapsto M$ such that $M\big(x_1,x_2\big) = M_1x_1+M_2x_2$. The property $\bP$ is vacuous and by applying \cref{lem:wta} we obtain the following. 

\begin{Corollary} \label{thm:tcc}
$\calC_{\mathfrak{l}}$ is a subcategory of $\calC_{\mathfrak{c}}$ and is closed under addition.
\end{Corollary}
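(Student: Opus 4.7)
The plan is to invoke \cref{lem:wta} with the following concrete data: take $\calO$ to consist of all finite-dimensional real vector spaces $\mathbb{R}^n$; take $\calF$ to be the collection of all linear transformations between such spaces; set $\calM_{\mathbb{R}^{n_1},\mathbb{R}^{n_2}} := M_{n_1,n_2}(\mathbb{R})$ with its standard Borel $\sigma$-algebra (viewed as $\mathbb{R}^{n_1 n_2}$); and take the property $\bP$ to be the vacuous property satisfied by every measurable space. The commuting square in \cref{defn:acb} coincides with the form appearing in \cref{lem:wta} once one identifies $X_2 = \mathbb{R}^m$, $X_3 = \mathbb{R}^{n_2}$, and the evaluation map $e(r,M)=Mr$ with the multiplication $\times$.

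First I would verify \cref{assu:tia}. Linear transformations are Borel measurable, so $\calF$ is a collection of measurable maps. Matrix multiplication is the composition in the assigned $\calM$-families, and the identity matrix witnesses that each $\calM_{\mathbb{R}^n,\mathbb{R}^n}$ contains the identity morphism, so $\calO$ is a subcategory of measurable spaces. Given a linear $f\in\calF\cap\calM_{X_2,X_3}$ with matrix representation $A$, and matrices $M_1,M_2$ corresponding to $f_1\in\calM_{X_1,X_2}$ and $f_2\in\calM_{X_3,X_4}$, the composition $f_2\circ f\circ f_1$ is represented by the matrix product $M_2 A M_1 \in\calM_{X_1,X_4}$. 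The final bullet of \cref{assu:tia}, concerning property $\bP$, holds trivially by vacuity.

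Next I would verify \cref{assu:sxx}. The pairing $f_1\times f_2 : X\to X_1\times X_2$ of two linear maps is linear, hence in $\calF$. Each $\mathbb{R}^n$ is an abelian monoid (indeed a group) under componentwise addition. The required map $\phi_{X_1,X_2,X_3} : M_{n_1,n_3}(\mathbb{R})\times M_{n_2,n_3}(\mathbb{R}) \to M_{n_1+n_2,n_3}(\mathbb{R})$ is exactly the horizontal block concatenation $(M_1,M_2)\mapsto [M_1\ M_2]$, whose action on a stacked input $(x_1,x_2)\in\mathbb{R}^{n_1+n_2}$ is $M_1 x_1+M_2 x_2$, matching the prescribed formula. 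Vacuity of $\bP$ again makes its first bullet automatic. Applying \cref{lem:wta} then yields both that $\calC_{\mathfrak{l}}$ is a subcategory of $\calC_{\mathfrak{c}}$ and that it is closed under addition in the sense of \cref{defn:ici}.

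The main obstacle is purely notational: one must carefully match the convention from \cref{defn:acb} (in which $M\in M_{m,n_2}(\mathbb{R})$ acts as $\mathbb{R}^m\to\mathbb{R}^{n_2}$) with the $\calM$-notation of \cref{lem:wta}, and track dimensions through both the fiber-product composition and the concatenation used for addition. Beyond this bookkeeping, the statement is a direct instantiation of the more general \cref{lem:wta} with property $\bP$ chosen to be vacuous, so no nontrivial new argument is needed.
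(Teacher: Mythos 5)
Your proposal is correct and follows essentially the same route as the paper: the paper likewise instantiates \cref{lem:wta} with $\calO$ the finite-dimensional real vector spaces, $\calF$ the linear transformations, $\calM_{\mathbb{R}^{n_1},\mathbb{R}^{n_2}} = M_{n_1,n_2}(\mathbb{R})$, the block-concatenation map $(M_1,M_2)\mapsto M$ with $M(x_1,x_2)=M_1x_1+M_2x_2$ for addition, and a vacuous property $\bP$. Your write-up simply spells out the verification of \cref{assu:tia} and \cref{assu:sxx} in more detail than the paper does.
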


For the remaining of this section, we give a few more examples. In $\calC_{\mathfrak{l}}$, we may consider linear filters $\mathfrak{c}=(f,f')$ of (or equivalent to) the following form
\[ \begin{tikzcd}
X \arrow{r}{g} \arrow{d}{f} & \mathbb{R}^{n_1}\times M_{{n_1},{n_2}}(\mathbb{R}) \arrow{d}{p} \arrow{r}{\times} & \mathbb{R}^{n_2}  \\
\mathbb{R}^{n_1} \arrow{r}{id} & \mathbb{R}^{n_1},
\end{tikzcd}
\]
such that the following holds:
\begin{itemize}
    \item The map $id$ is the identity transformation. As a consequence, the fpm $f$ induces an fpm $p: \mathbb{R}^{n_1}\times M_{{n_1},{n_2}}(\mathbb{R}) \to \mathbb{R}^{n_1}$ such that the fiberwise measure $\nu_r, r\in \mathbb{R}^{n_1}$ on $\{r\}\times M_{{n_1},{n_2}} \cong M_{{n_1},{n_2}}$ is given by $g_*(\mu_r)$, where $\mu_r$ is the fiberwise measure on $f^{-1}(r) \subset X$. 
    \item There is a probability measure $\nu$ on $M_{{n_1},{n_2}}(\mathbb{R})$ such that $\nu_r=\nu$ for every $r\in \mathbb{R}^{n_1}$. 
\end{itemize}
The condition essentially says that the fiberwise measures are constant. This is exactly the setup of \cite{Ji22}. 
Let the collection of such morphisms form $\calC_{\mathfrak{t}}$, where $\mathfrak{t}$ stands for ``trivial''. 

In the language of \cref{assu:tia}, $\calF$ consists of identity maps. The property $\bP$ is that the induced fiberwise measure $\mathbb{R}^{n_1}\times M_{{n_1},{n_2}}(\mathbb{R}) \to \mathbb{R}^{n_1}$ is constant. Suppose $\mathbb{R}^{n_1}\times M_{{n_1},{n_2}}(\mathbb{R}) \to \mathbb{R}^{n_1}$ has constant fiberwise measure $\mu_1$ and $\mathbb{R}^{n_2}\times M_{{n_2},{n_3}}(\mathbb{R}) \to \mathbb{R}^{n_2}$ has constant fiberwise measure $\mu_2$. Then they induce $\mathbb{R}^{n_1}\times M_{{n_1},{n_3}}(\mathbb{R}) \to \mathbb{R}^{n_3}$ with the constant fiberwise measure $\rho_*(\mu_1\times \mu_2)$ induced by $\rho: M_{{n_1},{n_2}}(\mathbb{R})\times M_{{n_2},{n_3}}(\mathbb{R}) \to M_{{n_1},{n_3}}(\mathbb{R}), (M_1,M_2)\mapsto M_2M_1$. Similarly, if we have $\mu_1$ and $\mu_2$ on $M_{n_1,n_2}(\mathbb{R})$, then they induce the measure $\sigma_*(\mu_1\times \mu_2)$ on $M_{n_1,n_2}(\mathbb{R})$ by $\sigma: M_{n_1,n_2}(\mathbb{R})\times M_{n_1,n_2}(\mathbb{R}) \to M_{n_1,n_2}(\mathbb{R}), (M_1,M_2) \mapsto M_1+M_2$. 

\begin{Corollary}
    $\calC_{\mathfrak{t}}$ is a subcategory of $\calC_{\mathfrak{l}}$ and is closed under addition. 
\end{Corollary}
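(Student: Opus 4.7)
The plan is to invoke \cref{lem:wta}. I take $\calO$ to be the collection of finite-dimensional real vector spaces, set $\calM_{\mathbb{R}^{n_1},\mathbb{R}^{n_2}}=M_{n_1,n_2}(\mathbb{R})$, and let $\calF$ consist of the identity maps (augmented by the diagonals $\mathbb{R}^n\to\mathbb{R}^n\times\mathbb{R}^n$ that are needed to invoke \cref{assu:sxx}). The property $\bP$ on $\calM_{X_1,X_2}$ is ``equipped with a probability measure,'' which is exactly the constant fiberwise measure condition that defines $\calC_{\mathfrak{t}}$. Under these choices the morphisms produced by \cref{lem:wta} are precisely correspondences whose base map is the identity on $\mathbb{R}^{n_1}$ and whose fiberwise measure on $\mathbb{R}^{n_1}\times M_{n_1,n_2}(\mathbb{R})$ is constant, so the subcategory built by \cref{lem:wta} agrees with $\calC_{\mathfrak{t}}$. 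Since each such morphism is in particular a linear filter in the sense of \cref{defn:acb} (take $T=\mathrm{id}$), $\calC_{\mathfrak{t}}$ sits inside $\calC_{\mathfrak{l}}$.

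To verify \cref{assu:tia}, the only substantive content is the fourth bullet: given probability measures $\mu_1$ on $M_{n_1,n_2}(\mathbb{R})$ and $\mu_2$ on $M_{n_2,n_3}(\mathbb{R})$, I equip $M_{n_1,n_3}(\mathbb{R})$ with the pushforward $\rho_*(\mu_1\times\mu_2)$ under matrix multiplication $\rho(M_1,M_2)=M_2M_1$; this is the very construction spelled out in the paragraph preceding the corollary, and the result is manifestly a probability measure. For \cref{assu:sxx}, I take the transfer map $\phi_{X_1,X_2,X_3}$ to be horizontal concatenation $(M_1,M_2)\mapsto[M_1\ M_2]\in M_{2n_1,n_2}(\mathbb{R})$, which satisfies the required identity $[M_1\ M_2](x_1,x_2)=M_1x_1+M_2x_2$, and the pushforward of $\mu_1\times\mu_2$ through it supplies the needed probability measure on $M_{2n_1,n_2}(\mathbb{R})$. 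Composition closure and addition closure then follow directly from \cref{lem:wta}.

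The main subtlety I anticipate is the final identification step for addition: \cref{lem:wta} naturally presents $\mathfrak{c}_1+\mathfrak{c}_2$ as a filter whose total space involves the diagonal $\mathbb{R}^{n_1}\hookrightarrow\mathbb{R}^{2n_1}$ and a random matrix in $M_{2n_1,n_2}(\mathbb{R})$, whereas the cleanest description (and the one matching the $T=\mathrm{id}$ form of $\calC_{\mathfrak{t}}$) uses a single random matrix in $M_{n_1,n_2}(\mathbb{R})$ drawn from $\sigma_*(\mu_1\times\mu_2)$ for the matrix sum $\sigma(M_1,M_2)=M_1+M_2$. The identity $[M_1\ M_2](r,r)=(M_1+M_2)r$ together with the pushforward relation through $\sigma$ furnishes the fiberwise equivalence (in the sense of \cref{def:tff}) between the two forms, completing the verification that the sum still lies in $\calC_{\mathfrak{t}}$.
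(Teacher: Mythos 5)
Your proposal is correct and follows essentially the same route as the paper: both instantiate \cref{lem:wta} with $\calF$ the identity maps, $\calM_{\mathbb{R}^{n_1},\mathbb{R}^{n_2}}=M_{n_1,n_2}(\mathbb{R})$, property $\bP$ the constant-fiberwise-measure condition, composition handled by $\rho_*(\mu_1\times\mu_2)$ and addition by $\sigma_*(\mu_1\times\mu_2)$. You are in fact slightly more careful than the paper on two points --- augmenting $\calF$ with the diagonal maps so that \cref{assu:sxx} literally applies, and spelling out the fiberwise equivalence $[M_1\ M_2](r,r)=(M_1+M_2)r$ that puts the sum back into the $T=\mathrm{id}$ form of $\calC_{\mathfrak{t}}$ --- but these are refinements of the same argument, not a different one.
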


Let $\text{Vect}(\mathbb{R})$ be the category of finite dimensional $\mathbb{R}$ vectors. It embeds in $\calC_{\mathfrak{t}}$ via a functor $\iota$ as follows. For $\mathbb{R}^n \in \text{Vect}(\mathbb{R})$, $\iota(\mathbb{R}^n) = \mathbb{R}^n$. If $T: \mathbb{R}^{n_1} \to \mathbb{R}^{n_2}$ is a linear transformation, then $\iota(T)$ is the following morphism
\[ \begin{tikzcd}
\mathbb{R}^{n_1}\times M_{{n_1},{n_2}}(\mathbb{R}) \arrow{r}{id} \arrow{d}{p} & \mathbb{R}^{n_1}\times M_{{n_1},{n_2}}(\mathbb{R}) \arrow{d}{p} \arrow{r}{\times} & \mathbb{R}^{n_2}  \\
\mathbb{R}^{n_1} \arrow{r}{id} & \mathbb{R}^{n_1},
\end{tikzcd}
\]
where the fpm $p$ has the constant fiberwise measure supported on the single transformation $T$.

We consider the following variant $\calC_{\mathfrak{f}}$ of $\calC_{\mathfrak{t}}$, where $\mathfrak{f}$ stands for ``finite'', ``fiber bundle''. As a spoiler, this corresponds to locally constant SAGS in \cite{Ji23d}. We make the following (sole) modification to the definition $\calC_{\mathfrak{t}}$, which is essentially a change to the property $\bP$. For the projection $p: \mathbb{R}^{n_1}\times M_{{n_1},{n_2}}(\mathbb{R}) \to \mathbb{R}^{n_1}$, the induced measure fiberwise measures $\{\mu_r,r\in \mathbb{R}^{n_1}\}$ satisfy: there is a set $K$ of (Lebesgue) measure $0$ such that for each $r\notin K$, $\mu_r$ is supported on the set of invertible matrices and there is an open subset $U_r$ of $r$ with $\mu_{r'} = \mu_r$ for $r'\in U_r$. The condition essentially requires that $\{\mu_r,r \in \mathbb{R}^{n_1}\}$ is locally constant. 

To apply \cref{lem:fxx}, we claim that \begin{align*}\rho: M_{{n_1},{n_2}}(\mathbb{R})\times M_{{n_2},{n_3}}(\mathbb{R}) \to M_{{n_1},{n_3}}(\mathbb{R}), (M_1,M_2)\mapsto M_2M_1\end{align*} induces locally constant fiberwise measures (in the above sense) on $M_{n_1,n_3}(\mathbb{R})$. To see this, let the measure $0$ subsets of $M_{{n_1},{n_2}}(\mathbb{R})$ and $M_{{n_2},{n_3}}(\mathbb{R})$ be $K_1$ and $K_2$ respectively. 
For $r\notin K_1$, let the support of $\mu_r$ be the finite set of invertible matrices $S_r = \{G_{r,1},\ldots, G_{r,k}\}$. Then $K_{r,1}=\cap_{1\leq i\leq k}G_{r,i}^{-1}(K_2)\cap U_r$ has measure zero. A countable union of $U_r$ covers $\mathbb{R}^{n_1}\backslash K_1$ and the corresponding union $K_1'$ of $K_{r,1}$ has measure zero. As $S_r$ is a finite set of invertible matrices, for $r' \in U_r\backslash K_{r,1}$, we can always find a small open neighborhood $W_{r'}$ of $r'$ such that the fiberwise measures on $p: \mathbb{R}^{n_2}\times M_{n_2,n_3}(\mathbb{R})\to \mathbb{R}^{n_2}$ is constant on $G_{r,i}(W_{r'})$ for every $1\leq i\leq k$. Therefore, as in the argument for $\calC_{\mathfrak{t}}$, $\rho$ induces a constant measure on $W_{r'}$. This holds true outside $K_1'$ and the claim is proved.

For the addition, the argument is simpler as we notice that if two sets of fiberwise measures are constant on open sets $U_1$ and $U_2$, then their sum is constant on $U_1\cap U_2$. We omit the details and the following holds. 

\begin{Corollary}
    $\calC_{\mathfrak{f}}$ is a subcategory of $\calC_{\mathfrak{c}}$ and is closed under addition.
\end{Corollary}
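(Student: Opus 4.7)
The plan is to deduce the corollary by applying \cref{lem:wta} together with \cref{assu:sxx}. Concretely, take $\calO$ to be the finite-dimensional real vector spaces, $\calF$ the identity maps, $\calM_{\mathbb{R}^{n_1},\mathbb{R}^{n_2}} = M_{n_1,n_2}(\mathbb{R})$, and let $\bP$ be the property from the definition of $\calC_{\mathfrak{f}}$: the induced fiberwise measures $\{\mu_r\}_{r\in\mathbb{R}^{n_1}}$ are supported on invertible matrices off a Lebesgue-null set $K$, and for each $r\notin K$ there is an open neighborhood $U_r$ on which $\mu_{r'}=\mu_r$.

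First I would verify the composition clause of \cref{assu:tia}: composition of linear filters pushes fiberwise measures forward along the matrix-product map $\rho\colon (M_1,M_2)\mapsto M_2M_1$. Given two families with null sets $K_1,K_2$ and locality neighborhoods $U_r$, for $r\notin K_1$ with finite support $S_r=\{G_{r,1},\ldots,G_{r,k}\}$ of invertible matrices, each $G_{r,i}^{-1}(K_2)$ is null (this is where invertibility is essential), so $K_{r,1}=U_r\cap\bigcap_{i} G_{r,i}^{-1}(K_2)$ is null. A countable subcover of the $U_r$'s aggregates these into a global null set $K_1'$, and for each $r'\in U_r\setminus K_{r,1}$ one may pick a small $W_{r'}\subset U_r$ so that on each $G_{r,i}(W_{r'})$ the second family of measures is constant. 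On $W_{r'}$ the pushforward $\rho_*(\mu_r\times\nu_{\,\cdot})$ is then constant, which verifies $\bP$ for the composed filter.

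For closure under addition via \cref{assu:sxx}, the map $\phi_{X_1,X_2,X_3}$ is induced by matrix concatenation as in the proof of \cref{thm:tcc} and satisfies $\phi((f_1,f_2))(x_1,x_2)=M_1 x_1+M_2 x_2$. If two families of fiberwise measures are locally constant on open neighborhoods $U_1$ and $U_2$ off their respective null sets, then the pushforward along $\phi$ is locally constant on $U_1\cap U_2$ off the union of the null sets; invertibility plays no role here since addition does not require inversion.

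The main obstacle is the composition step: the invertibility clause of $\bP$ is exactly what is needed to ensure that the pullback of the null set $K_2$ by each $G_{r,i}$ remains null, which is the mechanism by which the locally constant structure survives pushforward along $\rho$. Once that is in place, the addition part is essentially formal, and both the subcategory property and closure under addition drop out of the general framework of \cref{lem:wta}.
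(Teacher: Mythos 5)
Your proposal is correct and follows essentially the same route as the paper: it instantiates the framework of \cref{lem:wta} with $\bP$ being local constancy (off a Lebesgue-null set, with finitely supported measures on invertible matrices), verifies composition by pulling the null set $K_2$ back through the invertible support matrices $G_{r,i}$ and shrinking to neighborhoods $W_{r'}$ on which $\rho_*$ of the product measure is constant, and handles addition by intersecting the neighborhoods $U_1\cap U_2$. This matches the paper's argument step for step, including its reliance on finite supports.
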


\section{Conditional expectation} \label{sec:ce}

In this section, we discuss the relationship between our framework and condition expectation. 

Suppose we are given $\mathfrak{c} = (f,f')$ with $f: Y \to \mathbb{R}^{n_1}$ and $f': Y \to \mathbb{R}^{n_2}$. The former is an fpm with fiberwise measures $\{\mu_r\}_{r\in \mathbb{R}^{n_1}}$. It induces a map $\mathfrak{c}: \calP(\mathbb{R}^{n_1}) \to \calP(\mathbb{R}^{n_2})$. We want to find a map $\mathbb{R}^{n_1} \to \mathbb{R}^{n_2}$ that is a good approximation of $\mathfrak{c}$. 

To construct $e_{\mathfrak{c}}$, for $r \in \mathbb{R}^{n_1}$, define 
\begin{align*}
e_{\mathfrak{c}}(r) = \int_{y\in f^{-1}(r)}f'(y) \ud\mu_r = \int_{r'\in \mathbb{R}^{n_2}} r'\ud \mathfrak{c}(\delta_r).
\end{align*}
It is related to conditional expectation as follows. Recall that for any $\mu \in \mathcal{P}(\mathbb{R}^{n_1})$, it pulls back to a distribution $f^*(\mu)$ on $Y$. In this respect, both $f$ and $f'$ can be viewed as random variables on the sample space $Y$. It is well known that there is a condition expectation $e_{f'}: \mathbb{R}^{n_1} \to \mathbb{R}^{n_2}$ such that $e_{f'}(r) = e_{\mathfrak{c}}(r)$ up to a set with $\mu$ measure $0$. Due to this fact, the promised approximation property of $e_{\mathfrak{c}}$ reads as follows.

\begin{Theorem} \label{thm:ftf}
For the fpm $\mathfrak{c}=(f,f')$, the function $e_{\mathfrak{c}}$ is measurable and $e_{\mathfrak{c},*}:\mathcal{P}(\mathbb{R}^{n_1}) \to \mathcal{P}(\mathbb{R}^{n_2})$ is well-defined. Moreover, for any measurable $g: \mathbb{R}^{n_1} \to \mathbb{R}^{n_2}$ and subset $S \subset \mathbb{R}^{n_1}$, the following holds:
\begin{align*}
   \sup_{\text{supp}(\mu) \subset S} W\big(e_{\mathfrak{c},*}(\mu), \mathfrak{c}(\mu)\big) \leq \sup_{\text{supp}(\nu) \subset S} W\big(g_*(\nu), \mathfrak{c}(\nu)\big),
\end{align*}
where $W$ is the $2$-Wasserstein metric and the supreme is taken over $\mu$ (resp.\ $\nu$) in $\calP(\mathbb{R}^{n_1})$ supported in $S$. 
\end{Theorem}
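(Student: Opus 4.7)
The plan is to use the identification of $e_{\mathfrak{c}}(r)$ with the $\mu_r$-mean of $f'$, together with the variational characterisation of the mean as the best constant $L^2$-approximation, to prove the inequality first for point masses and then lift to arbitrary $\mu$ by a mixture coupling argument. Measurability of $e_{\mathfrak{c}}$ follows from standard disintegration theory: as noted just before the statement, $e_{\mathfrak{c}}\circ f$ is a version of $\mathbb{E}[f'\mid f]$ on $(Y,f^*(\mu))$ for any $\mu\in\mathcal{P}(\mathbb{R}^{n_1})$, and since the defining integral for $e_{\mathfrak{c}}(r)$ is independent of $\mu$, this yields a single measurable function $e_{\mathfrak{c}}:\mathbb{R}^{n_1}\to\mathbb{R}^{n_2}$. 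Well-definedness of $e_{\mathfrak{c},*}$ is then Jensen's inequality: $\int \lvert e_{\mathfrak{c}}(r)\rvert^2\, d\mu(r) \le \int \lvert r'\rvert^2\, d\mathfrak{c}(\mu) <\infty$, so $e_{\mathfrak{c},*}(\mu)\in\mathcal{P}(\mathbb{R}^{n_2})$.

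For the inequality, I would first handle the atomic case $\mu=\delta_r$. Using $\mathfrak{c}(\delta_r)=f'_*(\mu_r)$ and $W^2(\delta_a,\nu)=\int\lvert x-a\rvert^2\, d\nu$, a direct computation gives
\begin{align*}
W^2\bigl(\delta_{e_{\mathfrak{c}}(r)},\mathfrak{c}(\delta_r)\bigr) = \int_{f^{-1}(r)}\lvert f'(y)-e_{\mathfrak{c}}(r)\rvert^2\, d\mu_r(y),
\end{align*}
and the analogous identity with $g(r)$ in place of $e_{\mathfrak{c}}(r)$. Since $e_{\mathfrak{c}}(r)$ is the mean of $f'$ under $\mu_r$, it minimises the right-hand side among constants, yielding the atomic bound $W(\delta_{e_{\mathfrak{c}}(r)},\mathfrak{c}(\delta_r))\le W(\delta_{g(r)},\mathfrak{c}(\delta_r))$ for every $r$. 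To lift to arbitrary $\mu$ with $\text{supp}(\mu)\subset S$, I would use the mixture representations $\mathfrak{c}(\mu)=\int\mathfrak{c}(\delta_r)\, d\mu(r)$ and $e_{\mathfrak{c},*}(\mu)=\int\delta_{e_{\mathfrak{c}}(r)}\, d\mu(r)$, and form the glued coupling $\pi=\int\pi_r\, d\mu(r)$ from the optimal couplings $\pi_r$ of $\delta_{e_{\mathfrak{c}}(r)}$ and $\mathfrak{c}(\delta_r)$; this produces the standard mixture bound
\begin{align*}
W^2\bigl(e_{\mathfrak{c},*}(\mu),\mathfrak{c}(\mu)\bigr) \le \int_S W^2\bigl(\delta_{e_{\mathfrak{c}}(r)},\mathfrak{c}(\delta_r)\bigr)\, d\mu(r) \le \sup_{r\in S} W^2\bigl(\delta_{e_{\mathfrak{c}}(r)},\mathfrak{c}(\delta_r)\bigr).
\end{align*}
Combining this with the atomic bound and the trivial inequality $\sup_{\text{supp}(\nu)\subset S} W(g_*(\nu),\mathfrak{c}(\nu))\ge \sup_{r\in S} W(\delta_{g(r)},\mathfrak{c}(\delta_r))$ (specialising to $\nu=\delta_r$), taking square roots and the supremum over $\mu$ closes the chain.

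The main obstacle is the measurability required to form the glued coupling: one needs $r\mapsto\pi_r\in\mathcal{P}(\mathbb{R}^{n_2}\times\mathbb{R}^{n_2})$ to be measurable in order to integrate it against $\mu$. Fortunately, because one marginal $\delta_{e_{\mathfrak{c}}(r)}$ is a Dirac mass, the optimal coupling is uniquely $\delta_{e_{\mathfrak{c}}(r)}\otimes f'_*(\mu_r)$, so no genuine measurable selection is needed and the question reduces to the measurability of the disintegration $r\mapsto\mu_r$, which is part of the data of $\mathfrak{c}$ being an fpm. With this technicality settled, the remainder of the argument is the elementary chain of estimates above.
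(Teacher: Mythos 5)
Your proposal is correct and follows essentially the same route as the paper: your glued coupling $\int \delta_{e_{\mathfrak{c}}(r)}\otimes \mathfrak{c}(\delta_r)\,\ud\mu(r)$ is exactly the paper's pushforward of $f^*(\mu)$ under $y\mapsto(e_{\mathfrak{c}}\circ f(y),f'(y))$, and your fiberwise use of the variational characterisation of the mean is the paper's appeal to the $L^2$-optimality of conditional expectation, just disintegrated at the start rather than at the end. The measurability and Jensen steps likewise match the paper's treatment.
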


\begin{proof}
We first show that $e_{\mathfrak{c}}$ is measurable. Let $C$ be any compact subset of $\mathbb{R}^{n_1}$ and $\mu_C$ be the uniform distribution on $C$. It induces the pullback measure $f^*(\mu_C)$ on $Y$. Moreover, it is easy to verify that $f^*f_*(\mu_C) = \mu_C$. 

We view $Y$ as the sample space with probability distribution $f^*(\mu_C)$ and $f$ and $f'$ as random variables. Let $e_C: \mathbb{R}^{n_1} \to \mathbb{R}^{n_2}$ be the associated conditional expectation. By the construction, we have $e_C = e_{\mathfrak{c}}$ on $C$ and $e_C = 0$ on the complement $\mathbb{R}^n\backslash C$. Moreover, $e_C$ is measurable w.r.t.\ the measure $\mu_C$. However, as $\mu_C$ is uniform, $e_C$ is also measurable w.r.t.\ the Lebesgue measure.  

Let $C_1 \subset C_2 \subset \ldots \subset C_i \subset \ldots$ be a sequence of compact subsets of $\mathbb{R}^{n_1}$ such that $\cup_{i\geq 1}C_i = \mathbb{R}^{n_1}$. Then $(e_{C_i})_{i\geq 1}$ is a sequence of measurable functions whose pointwise limit is $e_{\mathfrak{c}}$. Therefore, $e_{\mathfrak{c}}$ is also measurable. 

As a consequence, given any distribution $\mu$ on $\mathcal{P}(\mathbb{R}^{n_1})$, pushforward induces $e_{\mathfrak{c},*}(\mu)$. We need to show that $e_{\mathfrak{c},*}(\mu) \in \calP(\mathbb{R}^{n_2})$ in order to claim that $e_{\mathfrak{c},*}$ is well defined as a map $\mathcal{P}(\mathbb{R}^{n_1}) \to \mathcal{P}(\mathbb{R}^{n_2})$. For the finiteness of the mean, it suffices to use linearity. While for the finiteness of the $2$nd moment, consider the Jensen inequality:
\begin{align} \label{eq:nmx}
    \norm{\mathbb{E}_{x\in f^{-1}(r)\sim \mu_r}f'(x)}^2\leq \mathbb{E}_{x\in f^{-1}(r)\sim \mu_r}\norm{f'(x)}^2, r\in \mathbb{R}^{n_2}.
\end{align}
For any $\mu \in \mathcal{P}(\mathbb{R}^{n_1})$, to show that $e_{\mathfrak{c},*}(\mu) \in \mathcal{P}(\mathbb{R}^{n_2})$, it suffices to check that 
\begin{align*}
    \int_{r\in \mathbb{R}^{n_1}} \norm{\mathbb{E}_{x\in f^{-1}(r)\sim \mu_r}f'(x)}^2 d\mu < \infty.
\end{align*}
However, by (\ref{eq:nmx}), the left-hand side is bounded by 
\begin{align*}
    \int_{r\in \mathbb{R}^{n_1}} \mathbb{E}_{x\in f^{-1}(r)\sim \mu_r}\norm{f'(x)}^2 d\mu < \infty,
\end{align*}
due to the assumption that $\mathfrak{c}(\mu) \in \mathcal{P}(\mathbb{R}^{n_2})$.

To show the claimed inequality, 
let $S$ be a subset $\mathbb{R}^{n_1}$ and $\mu \in \calP(\mathbb{R}^{n_1})$ be supported on $S$. Consider $\phi$ the pushforward measure of $f^*(\mu)$ on $\mathbb{R}^{n_2}\times \mathbb{R}^{n_2}$ via the map: $Y\to \mathbb{R}^{n_2}\times \mathbb{R}^{n_2}, y\mapsto (e_{\mathfrak{c}}\circ f(y),f'(y))$. The marginals of $\phi$ are 
\begin{align*}
(e_{\mathfrak{c}}\circ f)_*(f^*(\mu)) = e_{\mathfrak{c},*}(\mu) \text{ and } {f'}_*(f^*(\mu)) = \mathfrak{c}(\mu)
\end{align*}
respectively. Therefore,   
\begin{align*}
    & W(e_{\mathfrak{c},*}(\mu), \mathfrak{c}(\mu))^2 \\
    \leq & \int_{(r_1,r_2)\in \mathbb{R}^{n_2}\times \mathbb{R}^{n_2}} \norm{r_1-r_2}^2 \ud\phi \\
    = & \int_{y\in Y}\norm{e_{\mathfrak{c}}\circ f(y)-f'(y)}^2 \ud f^*(\mu) \\
    = & \mathbb{E}_{y \sim f^*(\mu)}\norm{e_{\mathfrak{c}}\circ f(y)-f'(y)}^2 \\
    \leq & \mathbb{E}_{y \sim f^*(\mu)}\norm{g\circ f(y)-f'(y)}^2,
\end{align*}
The last inequality holds as $e_{\mathfrak{c}}$ is the conditional expectation (up to a set of measure $0$) w.r.t.\ $f^*(\mu)$ on $Y$. 

To estimate the right-hand-side, we have 
\begin{align*}
\mathbb{E}_{y \sim f^*(\mu)}\norm{g\circ f(y)-f'(y)}^2 \leq \sup_{r\in S} \mathbb{E}_{y\sim f^*(\delta_r)}\norm{g\circ f(y)-f'(y)}^2.
\end{align*}
As for any $y \in f^{-1}$, $g\circ f(y)$ is the constant $g(r)$. Therefore, 
\begin{align*}
\mathbb{E}_{y\sim f^*(\delta_r)}\norm{g\circ f(y)-f'(y)}^2 = W\big(g_*(\delta_r),\mathfrak{c}(\delta_r)\big)^2 \leq \sup_{\text{supp}(\nu) \subset S} W\big(g_*(\nu), \mathfrak{c}(\nu)\big)^2.
\end{align*}
The result follows.
\end{proof}

The result essentially claims that $e_{\mathfrak{c}}$ is the best function approximation of $\mathfrak{c}$ in the $\infty$-norm. The map $e_{\mathfrak{c}}$ also enjoys some interesting algebraic properties. 

\begin{Lemma} \label{lem:whe}
We have $e_{\mathfrak{c}+\mathfrak{c}'} = e_{\mathfrak{c}}+e_{\mathfrak{c}'}$ and $e_{\mathfrak{c}\circ\mathfrak{c}'}(r_1) = \mathbb{E}_{r_2\sim \mathfrak{c}'(\delta_{r_1})} e_{\mathfrak{c}}(r_2)$, for morphisms $\mathfrak{c}, \mathfrak{c'}$ in $\calC_{\mathfrak{l}}$ whenever the respective binary operation is well-defined.
\end{Lemma}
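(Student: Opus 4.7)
The plan is to directly unfold the explicit models of the sum and the composition constructed earlier in the paper (in the proof of \cref{lem:fxx} and of \cref{thm:cic}, respectively), then apply Fubini, and finally recognize the resulting integrals as the quantities on the right-hand sides. No new analytic input should be required beyond the measurability guaranteed by \cref{thm:ftf} and the description of the composite fiberwise measure given in \cref{lem:cap}.

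For additivity, I would first write $\mathfrak{c}=(f,f')$ with total space $Y$ and fiberwise measures $\{\mu_r\}$, and $\mathfrak{c}'=(g,g')$ with total space $Y'$ and fiberwise measures $\{\mu_r'\}$. By construction, $\mathfrak{c}+\mathfrak{c}'$ has total space $Y\times_X Y'$, fpm $(y,y')\mapsto f(y)=g(y')$ with fiberwise measure $\mu_r\times \mu_r'$ at $r\in\mathbb{R}^{n_1}$, and second map $(y,y')\mapsto f'(y)+g'(y')$. Plugging into the definition of $e_{\mathfrak{c}+\mathfrak{c}'}(r)$ and using Fubini,
\begin{align*}
e_{\mathfrak{c}+\mathfrak{c}'}(r) &= \int_{(y,y')\in f^{-1}(r)\times g^{-1}(r)} \bigl(f'(y)+g'(y')\bigr)\,\ud(\mu_r\times\mu_r') \\
&= \int_{y\in f^{-1}(r)} f'(y)\,\ud\mu_r + \int_{y'\in g^{-1}(r)} g'(y')\,\ud\mu_r' = e_{\mathfrak{c}}(r)+e_{\mathfrak{c}'}(r),
\end{align*}
which is the claim. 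Integrability of each piece follows from the assumption that $\mathfrak{c}(\delta_r),\mathfrak{c}'(\delta_r)\in\calP(\mathbb{R}^{n_2})$.

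For the composition identity, I would write $\mathfrak{c}'=(f_1,f_1')$ with total space $Y_1$ and fiberwise measures $\{\mu_{1,r_1}\}_{r_1}$, and $\mathfrak{c}=(f_2,f_2')$ with total space $Y_2$ and fiberwise measures $\{\mu_{2,r_2}\}_{r_2}$. The model for $\mathfrak{c}\circ\mathfrak{c}'$ from the proof of \cref{thm:cic} has total space $Z=Y_1\times_{X_2}Y_2$, projections $g:Z\to Y_1$ and $g':Z\to Y_2$, fpm $h=f_1\circ g$, and second map $h'=f_2'\circ g'$. By \cref{lem:cap}, on $h^{-1}(r_1)$ the fiberwise measure decomposes as integration against $\mu_{1,r_1}$ in the $Y_1$-coordinate and against $\mu_{2,f_1'(y_1)}$ in the $Y_2$-coordinate. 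Then Fubini gives
\begin{align*}
e_{\mathfrak{c}\circ\mathfrak{c}'}(r_1) &= \int_{y_1\in f_1^{-1}(r_1)}\!\int_{y_2\in f_2^{-1}(f_1'(y_1))} f_2'(y_2)\,\ud\mu_{2,f_1'(y_1)}\,\ud\mu_{1,r_1} \\
&= \int_{y_1\in f_1^{-1}(r_1)} e_{\mathfrak{c}}\bigl(f_1'(y_1)\bigr)\,\ud\mu_{1,r_1}.
\end{align*}
Finally, since $\mathfrak{c}'(\delta_{r_1}) = f_{1*}'\circ f_1^*(\delta_{r_1}) = f_{1*}'(\mu_{1,r_1})$, pushing forward along $f_1'$ rewrites this as $\int_{r_2\in\mathbb{R}^{n_2}} e_{\mathfrak{c}}(r_2)\,\ud\mathfrak{c}'(\delta_{r_1}) = \mathbb{E}_{r_2\sim\mathfrak{c}'(\delta_{r_1})}e_{\mathfrak{c}}(r_2)$, as required.

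The only potential obstacle is measurability: one must know that $y_1\mapsto e_{\mathfrak{c}}(f_1'(y_1))$ is $\mu_{1,r_1}$-integrable, so that Fubini applies and the inner integral is well defined a.e. This is handled by \cref{thm:ftf}, which ensures $e_{\mathfrak{c}}$ is measurable and that the finite-variance condition propagates along the composition. Beyond this, everything is a direct unwinding of the constructions of $+$ and $\circ$ given earlier in the paper.
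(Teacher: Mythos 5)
Your proof is correct and follows essentially the same route as the paper's: both unfold the explicit fiber-product models of $+$ and $\circ$ (from \cref{lem:fxx} and \cref{thm:cic}/\cref{lem:cap}) and conclude by Fubini. The only cosmetic difference is that you integrate upstairs on the total spaces against the fiberwise measures and push forward at the end, whereas the paper first passes to the pushforward measures $\mathfrak{c}(\delta_{r})$ on the target and performs the same Fubini computation there; the two are identified by the change-of-variables formula for pushforwards.
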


\begin{proof}
To verify, assume both $\mathfrak{c},\mathfrak{c}'$ are morphisms from $\mathbb{R}^{n_1}$ to $\mathbb{R}^{n_2}$, we compute
\begin{align*}
     & e_{\mathfrak{c}+\mathfrak{c}'}(r)
    = \int_{r_2\in \mathbb{R}^{n_2}} r_2 \ud (\mathfrak{c}+\mathfrak{c}')(\delta_{r_1}) \\
    = & \int_{r_2 \in \mathbb{R}^{n_2}}\int_{r_2'\in \mathbb{R}^{n_2}}r_2+r_2' \ud\mathfrak{c}'(\delta_{r_1})\ud \mathfrak{c}(\delta_{r_1}) \\
    = &\int_{r_2\in \mathbb{R}^{n_2}}r_2+e_{\mathfrak{c}'}(r_1)\ud\mathfrak{c}(\delta_{r_1})\\
    = & e_{\mathfrak{c}'}(r_1) + \int_{r_2\in\mathbb{R}^{n_2}}r_2\ud\mathfrak{c}(\delta_{r_1}) = e_{\mathfrak{c}}(r_1)+e_{\mathfrak{c}'}(r_1).
\end{align*}
For the composition, consider $\mathfrak{c}' \in \text{Mor}_{\calC_{\mathfrak{l}}}(\mathbb{R}^{n_1},\mathbb{R}^{n_2})$ and $\mathfrak{c} \in \text{Mor}_{\calC_{\mathfrak{l}}}(\mathbb{R}^{n_2},\mathbb{R}^{n_3})$, we have
\begin{align*}
    & e_{\mathfrak{c}\circ\mathfrak{c}'}(r_1) = \int_{r_3\in \mathbb{R}^{n_3}}r_3 \ud\mathfrak{c}\circ\mathfrak{c}'(\delta_{r_1}) \\
    = & \int_{r_2\in \mathbb{R}^{n_2}}\int_{r_3\in \mathbb{R}^{n_3}} r_3 \ud\mathfrak{c}(\delta_{r_2}) \ud\mathfrak{c}'(\delta_{r_1})\\
    = & \int_{r_2\in \mathbb{R}^{n_2}}e_{\mathfrak{c}}(r_2)\ud\mathfrak{c}'(\delta_{r_1}) = \mathbb{E}_{r_2\sim \mathfrak{c}'(\delta_{r_1})} e_{\mathfrak{c}}(r_2).
\end{align*}
\end{proof}

\begin{Corollary} \label{coro:tai}
The assignment $\iota^*: \mathfrak{c}=(f,f') \mapsto e_{\mathfrak{c}}$ induces a functor from $\calC_{\mathfrak{t}}$ to $\text{Vect}(\mathbb{R})$ as a right inverse to the embedding $\iota$. 
\end{Corollary}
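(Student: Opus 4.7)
The plan is to verify two things: (i) $\iota^{*}$ sends each morphism of $\calC_{\mathfrak{t}}$ to a genuine linear map and respects identities and composition; (ii) the composition $\iota^{*}\circ \iota$ is the identity functor on $\text{Vect}(\mathbb{R})$.

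First I would unwind the formula for $e_{\mathfrak{c}}$ in the special case of $\calC_{\mathfrak{t}}$. Any $\mathfrak{c}=(f,f') \in \text{Mor}_{\calC_{\mathfrak{t}}}(\mathbb{R}^{n_1},\mathbb{R}^{n_2})$ is equivalent to a diagram whose right column is governed by a single probability measure $\nu$ on $M_{n_1,n_2}(\mathbb{R})$ (independent of $r$). Pushing $\mathfrak{c}(\delta_r)$ through the integral defining $e_{\mathfrak{c}}$ gives
\begin{align*}
e_{\mathfrak{c}}(r) \;=\; \int_{M \in M_{n_1,n_2}(\mathbb{R})} Mr \, \ud\nu(M) \;=\; \bigl(\textstyle\int M \, \ud\nu(M)\bigr)\, r \;=\; \bar{M}_{\nu}\, r,
\end{align*}
which is manifestly a linear transformation $\mathbb{R}^{n_1}\to \mathbb{R}^{n_2}$; the finiteness of $\bar{M}_{\nu}$ follows from the hypothesis that $\mathfrak{c}$ maps $\calP(\mathbb{R}^{n_1})$ into $\calP(\mathbb{R}^{n_2})$, in particular at $\delta_r$.

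Next I would establish functoriality. For the identity morphism of $X = \mathbb{R}^{n}$ in $\calC_{\mathfrak{t}}$ (which sits in $\calC_{\mathfrak{t}}$ via the constant fiberwise measure $\delta_{I}$ on $M_{n,n}(\mathbb{R})$) the above formula gives $e_{\mathfrak{i}}(r) = I\cdot r = r$, so identities are preserved. For composition, I would invoke \cref{lem:whe}: for $\mathfrak{c}' \in \text{Mor}_{\calC_{\mathfrak{t}}}(\mathbb{R}^{n_1},\mathbb{R}^{n_2})$ and $\mathfrak{c} \in \text{Mor}_{\calC_{\mathfrak{t}}}(\mathbb{R}^{n_2},\mathbb{R}^{n_3})$,
\begin{align*}
e_{\mathfrak{c}\circ \mathfrak{c}'}(r_1) \;=\; \mathbb{E}_{r_2 \sim \mathfrak{c}'(\delta_{r_1})}\, e_{\mathfrak{c}}(r_2) \;=\; \mathbb{E}_{r_2 \sim \mathfrak{c}'(\delta_{r_1})}\, \bar{M}_{\mathfrak{c}}\, r_2 \;=\; \bar{M}_{\mathfrak{c}} \, \mathbb{E}_{r_2 \sim \mathfrak{c}'(\delta_{r_1})} r_2 \;=\; \bar{M}_{\mathfrak{c}}\, \bar{M}_{\mathfrak{c}'}\, r_1,
\end{align*}
where the crucial pull-through of the expectation is justified precisely because $e_{\mathfrak{c}}$ is linear in the $\calC_{\mathfrak{t}}$ setting. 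The last expression equals $(e_{\mathfrak{c}}\circ e_{\mathfrak{c}'})(r_1)$, so $\iota^{*}$ respects composition; well-definedness on equivalence classes follows because a fiberwise equivalence preserves the fiberwise measures and therefore the integral defining $e_{\mathfrak{c}}$.

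Finally, for the right-inverse property, I would take a linear map $T\colon \mathbb{R}^{n_1}\to \mathbb{R}^{n_2}$, note that $\iota(T)$ has constant fiberwise measure $\delta_{T}$ on $M_{n_1,n_2}(\mathbb{R})$, and compute $e_{\iota(T)}(r) = \int Mr \, \ud\delta_{T}(M) = Tr$, so $\iota^{*}(\iota(T)) = T$. The only mildly subtle point in the whole argument is confirming that the mean matrix $\bar{M}_{\nu}$ is finite and that one may exchange integral and linear map in the composition computation; both reduce to finiteness of the second moments, which is built into the definition of an fpm landing in the $2$-Wasserstein space. Thus $\iota^{*}$ is a well-defined functor and $\iota^{*}\circ \iota = \mathrm{id}_{\text{Vect}(\mathbb{R})}$, as claimed.
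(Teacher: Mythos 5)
Your proposal is correct and follows essentially the same route as the paper: the heart of both arguments is \cref{lem:whe} together with the linearity of $e_{\mathfrak{c}}$ for morphisms of $\calC_{\mathfrak{t}}$, which lets the expectation be pulled through to give $e_{\mathfrak{c}\circ\mathfrak{c}'} = e_{\mathfrak{c}}\circ e_{\mathfrak{c}'}$. You are somewhat more thorough than the paper, which takes the linearity of $e_{\mathfrak{c}}$ for granted and omits the identity and right-inverse checks; your explicit computation $e_{\mathfrak{c}}(r) = \bar{M}_{\nu}r$ supplies those details but does not change the argument.
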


\begin{proof}
 It suffices to show that $\iota^*(\mathfrak{c})\circ \iota^*(\mathfrak{c}') = \iota^*(\mathfrak{c}\circ \mathfrak{c}')$ for composable morphisms $\mathfrak{c}, \mathfrak{c}'$ in $\calC_{\mathfrak{t}}$. This requires that $e_{\mathfrak{c}\circ \mathfrak{c}'} = e_{\mathfrak{c}}\circ e_{\mathfrak{c}'}$. As $\mathfrak{c}$ is a morphism in $\calC_{\mathfrak{t}}$, $e_{\mathfrak{c}}$ is a linear transformation. Therefore by \cref{lem:whe}, we have 
 \begin{align*}
      e_{\mathfrak{c}\circ\mathfrak{c}'}(r_1)  = \mathbb{E}_{r_2\sim \mathfrak{c}'(\delta_{r_1})} e_{\mathfrak{c}}(r_2) = e_{\mathfrak{c}}(\mathbb{E}_{r_2\sim \mathfrak{c}'(\delta_{r_1})} r_2) = e_{\mathfrak{c}}\circ e_{\mathfrak{c}'}(r_1).
 \end{align*}
 The result follows.
\end{proof}

\section{Graph signal processing concepts}

In this section, we use the language of the paper to introduce concepts that generalize their counterparts in the traditional graph signal processing theory. Recall that a morphism $\mathfrak{c}$ consists of a pair of maps $(f,f')$, where $f$ is an fpm. The fpm $f$ usually encodes graph or topological information of the data, and it is usually known a priori. It is the map $f'$ that accounts for the ``transformation'' between objects.  

\subsubsection*{Change of basis}
The change of basis intends to generalize Fourier transform. Let $O_n(\mathbb{R})$ be the (group of) $n\times n$ orthogonal matrices. A morphism $\mathfrak{c}$ in $\calC_{l}$ is a \emph{change of basis} if it take the following form:
\[ \begin{tikzcd}
X \arrow{r}{g} \arrow{d}{f} & \mathbb{R}^n\times M_n(\mathbb{R}) \arrow{d}{p} \arrow{r}{\times} & \mathbb{R}^n  \\
\mathbb{R}^n \arrow{r}{id} & \mathbb{R}^n.
\end{tikzcd}
\]
such that the induced fpm $p: \mathbb{R}^n \times M_n(\mathbb{R}) \to \mathbb{R}^n$ has fiberwise measures supported on $O_n(\mathbb{R})$. 

For example, for a fixed graph $G$, let $X = \mathbb{R}^n \times \{G\}$ and $g: X \to \mathbb{R}^n\times M_n(\mathbb{R})$, $(r,G) \mapsto (r, U_G)$, where $L_G= U_GD_GU_G^T$ is a fixed orthogonal decomposition of the Laplacian $L_G$. This is nothing but the Fourier transform in GSP. 

Its pseudo-inverse $\bar{\mathfrak{c}}$ is defined by $\bar{\mathfrak{c}}=(f,\bar{f'})$ where $\bar{f'}$ is the composition $ \times\circ \top \circ g$ where $\top: \mathbb{R}^n \times M_n(\mathbb{R}) \to \mathbb{R}^n \times M_n(\mathbb{R}), (r,M)\mapsto (r, M^{\top})$. It clearly generalizes the notion of inverse graph Fourier transform in GSP. We remark that the change of basis and its pseudo-inverse can be generalized in an obvious way if $\mathbb{R}$ is replaced by $\mathbb{C}$. 

\subsubsection*{Convolution} 
Let $D_n(\mathbb{R})$ be space of $n\times n$ diagonal matrices. A \emph{convolution kernel} is a morphism $\mathfrak{k}$ of the form $(k,k')$
\[ \begin{tikzcd}
X \arrow{r}{g} \arrow{d}{k} & \mathbb{R}^n\times M_n(\mathbb{R}) \arrow{d}{p} \arrow{r}{\times} & \mathbb{R}^n  \\
\mathbb{R}^n \arrow{r}{id} & \mathbb{R}^n.
\end{tikzcd}
\]
such that the induced fpm $p: \mathbb{R}^n \times M_n(\mathbb{R}) \to \mathbb{R}^n$ has fiberwise measures supported on $D_n(\mathbb{R})$. 

For a change of basis $\mathfrak{c}$ (and its pseudo-inverse $\bar{\mathfrak{c}}$) as in the previous example, then the associated convolution filter with kernel $\mathfrak{k}$ is the composition $ \bar{\mathfrak{c}}\circ \mathfrak{k}\circ \mathfrak{c}$. In \cite{Ji22}, $\mathfrak{c}$ and $\mathfrak{k}$ are morphisms in the subcategory $\calC_{\mathfrak{t}}$. The Fourier transform and convolution defined in \cite{Ji22} are $e_{\mathfrak{c}}$ and $e_{\bar{\mathfrak{c}}\circ \mathfrak{k}\circ \mathfrak{c}}$ introduced in \cref{sec:ce} respectively. They are hence justified by \cref{thm:ftf}.

\subsubsection*{Sampling} 
Let $P_n(\mathbb{R})$ be the set of $n\times n$ projection matrices, i.e., if $P \in P_n(\mathbb{R})$, then $P\circ P = P$. A morphism $\mathfrak{c}$ in $\calC_{l}$ is called \emph{sampling} if it take the following form:
\[ \begin{tikzcd}
X \arrow{r}{g} \arrow{d}{f} & \mathbb{R}^n\times M_n(\mathbb{R}) \arrow{d}{p} \arrow{r}{\times} & \mathbb{R}^n  \\
\mathbb{R}^n \arrow{r}{id} & \mathbb{R}^n.
\end{tikzcd}
\]
such that the induced fpm $p: \mathbb{R}^n \times M_n(\mathbb{R}) \to \mathbb{R}^n$ has fiberwise measures supported on $P_n(\mathbb{R})$. For any $\epsilon \geq 0$, we may define the \emph{$(\mathfrak{c},\epsilon)$-bandlimited signals} as $\{\mu\in \calP(\mathbb{R}^n) \mid W(\mu, \mathfrak{c}(\mu)) \leq \epsilon\}$. For recovery, we find a subset of $m$ coordinates of $\mathbb{R}^n$ and estimate $\mu$ given its observations at those prescribed coordinates. The primary examples of sampling are convolutions associated with diagonal matrices with only $0$ and $1$ as diagonal entries.  

GSP considers the cases when the fpm $p$ is supported on eigenspaces of graph Laplacians. If $\mathfrak{c}$ belongs to $M_n(\mathbb{R})$, then we have the setup of sampling in the traditional GSP, and the recovery amounts to find a pseudo-inverse mathematically. \cite{Ji22} studies the case when $\mathfrak{c}$ belongs to $\calC_{\mathfrak{t}}$. Signal recovery is to find a pseudo-inverse of the matrix $e_{\mathfrak{c}} \in M_n(\mathbb{R})$ (cf.\ \cref{coro:tai}).

\bibliographystyle{plain}
\bibliography{allref}

\end{document}